\newcommand\mypara{\par}
\newcommand{\tb}[1]{#1}
\newcommand{\tbb}[1]{#1}
\theoremstyle{plain}
\newtheorem{theorem}{Theorem}
\newtheorem{corollary}{Corollary}
\theoremstyle{definition}
\newtheorem{definition}{Definition}
\newtheorem{problem}{Problem}
\theoremstyle{remark}
\newtheorem{remark}{Remark}
\DeclareMathOperator{\diag}{diag}
\begin{document}
\title{Distributed Nonlinear Control Design using Separable Control Contraction Metrics}

\author{Humberto~Stein Shiromoto, Max Revay, and 
        Ian~R. Manchester% <-this % stops a space
\thanks{The authors are with the Australian Centre for Field Robotics, The University of Sydney, NSW 2006, Australia. Corresponding author: {\tt ian.manchester@sydney.edu.au}}% <-this % stops a space
}

% note the % following the last \IEEEmembership and also \thanks - 
% these prevent an unwanted space from occurring between the last author name
% and the end of the author line. i.e., if you had this:
% 
% \author{....lastname \thanks{...} \thanks{...} }
%                     ^------------^------------^----Do not want these spaces!
%
% a space would be appended to the last name and could cause every name on that
% line to be shifted left slightly. This is one of those "LaTeX things". For
% instance, "\textbf{A} \textbf{B}" will typeset as "A B" not "AB". To get
% "AB" then you have to do: "\textbf{A}\textbf{B}"
% \thanks is no different in this regard, so shield the last } of each \thanks
% that ends a line with a % and do not let a space in before the next \thanks.
% Spaces after \IEEEmembership other than the last one are OK (and needed) as
% you are supposed to have spaces between the names. For what it is worth,
% this is a minor point as most people would not even notice if the said evil
% space somehow managed to creep in.

% The paper headers
%\markboth{Journal of \LaTeX\ Class Files,~Vol.~14, No.~8, August~2015}%
%{Shell \MakeLowercase{\textit{et al.}}: Bare Demo of IEEEtran.cls for IEEE Journals}
% The only time the second header will appear is for the odd numbered pages
% after the title page when using the twoside option.
% 
% *** Note that you probably will NOT want to include the author's ***
% *** name in the headers of peer review papers.                   ***
% You can use \ifCLASSOPTIONpeerreview for conditional compilation here if
% you desire.

% make the title area
\maketitle

% As a general rule, do not put math, special symbols or citations
% in the abstract or keywords.
\begin{abstract}
This paper gives convex conditions for synthesis of a distributed  control system for large-scale networked nonlinear dynamic systems.  It is shown that the technique of control contraction metrics (CCMs) can be extended to this problem by utilizing {\em separable} metric structures, resulting in controllers that only depend on information from local sensors and communications from immediate neighbours. The conditions given are pointwise linear matrix inequalities, and are necessary and sufficient for linear positive systems and certain monotone nonlinear systems. Distributed synthesis methods for systems on chordal graphs are also proposed based on SDP decompositions. \tb{The results are illustrated on a problem of vehicle platooning with heterogeneous vehicles, and a network  of nonlinear dynamic systems with over 1000 states that is not feedback linearizable and has an uncontrollable linearization}.
\end{abstract}

% Note that keywords are not normally used for peerreview papers.
\begin{IEEEkeywords}
Nonlinear Systems, Feedback Design, Contraction Theory, Distributed Control, Network Systems
\end{IEEEkeywords}

% For peer review papers, you can put extra information on the cover
% page as needed:
% \ifCLASSOPTIONpeerreview
% \begin{center} \bfseries EDICS Category: 3-BBND \end{center}
% \fi
%
% For peerreview papers, this IEEEtran command inserts a page break and
% creates the second title. It will be ignored for other modes.
\IEEEpeerreviewmaketitle

\section{Introduction}

\mypara In recent years, rapid advances in communication and  computation technology have enabled the development of large-scale engineered systems such as smart grids \cite{hill_smart_2012}, sensor networks \cite{Pajic2011}, smart manufacturing plants \cite{wang_implementing_2016}, and intelligent transportation networks \cite{CanudasdeWitMorbidiLeonOjedaEtAl2015}. Despite these advances, the systematic design of feedback controllers for such large systems remains challenging.

\mypara When it is assumed that a system  has linear dynamics and that all sensor information can be collected in a single location for control computation, well-developed synthesis methods such as LQG and $H^\infty$ can be applied \cite{anderson1990optimal, dullerud2000course}. However, emerging applications motivate going beyond these assumptions.

Firstly, for geographically distributed systems with hundreds or thousands of nodes, such as transportation and power networks, it is not practical to collect all sensor information in one location for control. In this case there is a need for {\em distributed} methods that rely only on information available locally or communicated from nearby nodes.

Secondly, most real systems exhibit {\em nonlinear} dynamics. When large excursions in operating conditions are expected, e.g. due to changing production demands in a flexible manufacturing system, or recovery from a fault in a smart electrical grid, one must take into account the system nonlinearity.

%Most of current control algorithms are based on the assumption that the system will operate ``near'' to a prescribed set point. Consequently, the system is designed to have fixed behavior. This assumption imposes severe limitations for systems, as large excursions are not allowed, in general. For instance, large power variations in smart grids (e.g., when renewable sources of energy are taken into account) are not allowed and manufacturing plants tend to synthesize only a fixed range of goods. To render these systems more flexible, i.e., allowing large excursions a controller algorithm that takes into account the behavior far from the set point is needed.

\mypara Decentralised and distributed control are long-standing problems in control theory, with important early work detailed in \cite{sandell_survey_1978} and \cite{siljak1978large}. A key concept is the \textit{vector Lyapunov function}, i.e. a Lyapunov function made up of individual storage functions for the nodes, a concept closely related to the separability property we use in this paper.  Terminology is not completely uniform in the literature, but
in this paper we take ``decentralised'' to mean that at each node the controller uses {\em only} local state information, and  ``distributed'' to mean that {\em some} communication is allowed between nearby nodes. %In either case, there is a desire to impose a structure on information flow, and this turns out to be the main source of difficulty.

%\mypara Even for linear systems, it has long been known that apparently simple problems with decentralized information flow can be surprisingly challenging \cite{witsenhausen_counterexample_1968}. A recent breakthrough was the discovery that a property called {\em quadratic invariance} characterises a convex subclass of problems in which, roughly speaking, communication is faster than the propagation of dynamical effects \cite{rotkowitz_characterization_2006}.
%
%\begin{align}
%M(A+BK)+(A+BK)'M<0\\
%\Updownarrow\\
%AW+WA'+BY+Y'B'<0	
%\end{align}

%\mypara More directly related to this paper is a large body of work on linear matrix inequality (LMI) methods. 
For linear state feedback, information flow can be encoded by a sparsity structure on the feedback gain matrix, however in general this problem can be NP-hard \cite{BlondelTsitsiklis1997}. It has been recognized by many authors that if the search is restricted to {\em diagonal} (or {\em block diagonal}) Lyapunov matrices, then the problem is convex (see, e.g., \cite{zecevic_control_2010, Tanaka2011, Rantzer2015} and references therein). The main benefit is that sparsity structure in the gain matrix is preserved under the standard change of variables for LMI-based design. In general, restricting the set of Lyapunov functions is conservative: it produces sufficient conditions for stabilizability, but not necessary conditions. However, for the important sub-class of systems for which internal states are always  non-negative, known as {\em positive systems}, existence of a diagonal Lyapunov function is actually necessary and sufficient
(see, e.g., \cite{berman_nonnegative_1994} and references therein). This result has been extended to $H^\infty$ design \cite{Tanaka2011}, and scalable algorithms for control design \cite{Rantzer2015} and identification \cite{umenberger_scalable_2016} of networked positive systems.

\mypara Design of controllers for nonlinear systems has also been a major topic of research for many years, see e.g. \cite{Slotine1991,Krstic1995, Khalil:2001} for established approaches.
Most methods require (at least implicitly) the construction of a \emph{control Lyapunov function}. While for certain structured systems, constructive methods such as backstepping and energy-based control can be used \cite{Krstic1995}, no general methodology exists. Indeed, the set of control Lyapunov functions can be non-convex and disconnected \cite{Rantzer:2001}, which poses a challenge for synthesis. 

%Separable Lyapunov functions \cite{Dirr2015} and references therein. \cite{liu_nonlinear_2014}.

\mypara A drawback of standard Lyapunov functions is the fact that they are defined with respect to a particular set-point or limit set, which must be known {\em a priori}. When the target trajectory may change in real time, a common situation in robotics or flexible manufacturing, it is more appropriate to define a function depending on the distance {\em between} pairs of points. Tools such as contraction metrics \cite{Lohmiller1998} and incremental Lyapunov functions \cite{Angeli2002} provide such a capability for stability analysis. Contraction concepts have proven useful in the analysis of networked systems, in particular oscillation synchronization and entrainment \cite{Wang2004, PhamSlotine2007, russo2010global, Aminzare2014a}, and techniques for contraction analysis based on sum-separability properties of metrics \cite{Russo2013, como2015throughput, coogan2016separability, Manchester2017existence}. Extensions to reaction-diffusion PDE systems have appeared in \cite{arcak2011certifying}, where again a metric is constructed that integrates over space, generalizing the notion of sum-separability to continuous spaces.

The concept of a control contraction metric (CCM) was introduced in \cite{Manchester2014a, manchester_control_2017} and extends contraction analysis to constructive control design. The main advantages this method offers over the Lyapunov approach are that the synthesis conditions are convex, and it provides a stabilizing controller for {\em all} forward-complete solutions, not just a single set-point. It was shown in \cite{manchester_control_2017} that the CCM conditions are necessary and sufficient for feedback-linearizable nonlinear systems.

%mypara One of the key elements employed to design these controllers is the concept of a Riemannian metric (see the definition below) which is employed, similarly to Lyapunov functions, to measure the exponential convergence of pairs of solutions to each other. Employing CCMs that can be decomposed into the sum of smaller elements, the design of a distributed feedback law allowing exponential convergence of solutions is obtained in two steps. 

%\mypara The first step consists of an off-line computation. The original system is extended with its \emph{differential / variational} formulation. The exponential convergence problem is reformulated in terms of stability of the origin for the extended system. This stability problem is written in terms of a convex optimization problem under the prescribed structural constraint. The solution consists of a pair of matrices: the Riemannian metric and the \emph{differential / variational} feedback law. The second step consists of an on-line optimization. The obtained \emph{differential / variational} controller is integrated along geodesics (see definition below).  In contrast to feedback linearization which depends on integrability, the obtained controller depends only on the stabilizability property of the system. This stability notion is related to contraction theory (an incremental stability property \cite{Angeli2002,Lohmiller1998,Rueffer2013,Sontag2010,Forni2014}) which can be traced back to the work \cite{Lewis1949}.

\tb{
\mypara The main contributions of this paper are the following.
\begin{enumerate}
	\item We extend the results of \cite{manchester_control_2017} to show that by imposing a \textit{separable} structure on a control contraction metric, a distributed nonlinear feedback controller can be obtained via convex optimization, with the property that all \textit{on-line} computations can be performed with prescribed information sharing between nodes.
	\item We provide necessary conditions for the existence of a separable metric for certain classes of monotone systems.
	\item We show that the \textit{off-line} convex search for a CCM can scale to large-scale systems with chordal graph interaction structure.\end{enumerate}}
%First, using the methodology for synthesis of CCMs introduced in \cite{manchester_control_2017} and imposing a (block-)diagonal structural constraint, a distributed feedback law is obtained. Second, because the CCM has a (block-)diagonal structure, the computation of the online integration phase, which corresponds to an optimization problem, is also distributed. Third, the structure imposed on the CCM allows to improve scalability of the offline synthesis. %Note that the proposed method provides an explicit algorithm to design the controller. 
The  conference paper \cite{SteinShiromotoManchester2016} presented preliminary results related to, but less general than, the results of this paper. In particular, it considered completely decentralized design, and did not address scalability of the resulting computations. The main result of \cite{SteinShiromotoManchester2016} is Corollary 1 in this paper. 

%\paragraph*{Paper structure} The motivation and the problem formulation are provided in Section \ref{sec:Problem Formulation and Motivation}, with technical background in Section \ref{sec:background}. Sections \ref{sec:Design of Decentralized Controllers} and \ref{sec:distributed_design} present the proposed method and main theoretical results. Illustrations of the proposed approach are provided in Section \ref{sec:Illustration}. Section \ref{sec:Conclusion} collects final remarks.

\section{Preliminaries and Problem Formulation}

\subsection{Notation} We use the notation $\mathbb{R}_{\geq0}$ for the non-negative reals, and $\mathbb{N}_{[a,b]}$ with $a<b$ for natural numbers between $a$ and $b$.
%Let $\mathbb{S}$ be a totally ordered set and $s_1,s_2,s_3\in\mathbb{S}$. The notation $\mathbb{S}_{[s_1,s_2)}$ (resp. $\mathbb{S}_{\diamond s_3}$) stands for the set $\{s\in\mathbb{S}:s_1\leq s< s_2\}$ (resp. $\{s\in\mathbb{S}:s\diamond s_3\}$, where $\diamond$ is a comparison operator, i.e., $\diamond\in\{<,\geq,=,\ \text{etc}\}$). 
Let $n>0$ be any integer, the vector $e_i$ denotes the vector with zeros in all \tb{entries } except the $i$-th where it is 1. Given $N$ matrices $M_{1},\ldots,M_{N} \in\mathbb{R}^{p\times q}$, the notation $\diag(M_{1},\ldots,M_{N})$ denotes the block matrix $M\in\mathbb{R}^{Np\times Nq}$ with the $M_i$ matrices on the main (block) diagonal, and zeros elsewhere. The notation $M\succ 0$ (resp. $M\succeq 0$) stands for $M$ being positive (resp. semi)-definite. The sets of (semi)-definite symmetric matrices are denoted as $\mathbf{S}_{\star0}^n=\{M\in\mathbb{R}^{n\times n}:M\star0,M=M^T\}$, where $\star\in\{\succ,\succeq,\prec,\preceq\}$. 

\mypara The notation $\mathcal{L}_{\mathrm{loc}}^\infty(\mathbb{R}_{\geq0},\mathbb{R}^m)$ stands for the class of functions $u:\mathbb{R}_{\geq0}\to\mathbb{R}^m$ that are locally essentially bounded. Given differentiable functions $M:\mathbb{R}^n\to\mathbb{R}^{n\times n}$ and $f:\mathbb{R}^n\to\mathbb{R}^n$ the notation $\partial_fM$ stands for matrix with dimension $n\times n$ and with $(i,j)$ element given by $\frac{\partial m_{ij}}{\partial x}(x)f(x)$. The notation $\dot{f}$ always stands for the total derivative with respect to time $t$.

\mypara Let $N>0$ be an integer, a \emph{graph} consists of a set of \emph{nodes} $\mathscr{V}\subset\mathbb{N}_{[1,N]}$ and a set of \emph{edges} $\mathscr{E}\subset\mathscr{V}\times\mathscr{V}$ and it is denoted by the pair $(\mathscr{V},\mathscr{E})=\mathscr{G}$. A node $i\in\mathscr{V}$ is said to be \emph{adjacent} to a node $j\in\mathscr{V}$ if $(i,j)\in\mathscr{E}$, the set of nodes that are adjacent to $j$ is defined as $\mathscr{N}(j)=\{i\in\mathscr{V}:i\neq j,(i,j)\in\mathscr{E}\}$. A graph is said to be \emph{undirected} if, for every edge $(i,j)\in\mathscr{E}$, there exists $(j,i)\in\mathscr{E}$. It is said to be \emph{directed} if otherwise. \tb{For a directed graph $\mathscr{G}=(\mathscr{V},\mathscr{E})$, we define an undirected graph $\mathscr{G}^u=(\mathscr{V},\mathscr{E}^u)$ with  $(i,j)\in \mathscr{E}^u$ (and hence also $(j,i)\in \mathscr{E}^u$) if either $(i,j)\in \mathscr{E}$ or $(j,i)\in \mathscr{E}$, or both. Given two graphs with the same vertex set $\mathscr{G}_1 = (\mathscr{V}, \mathscr{E}_1)$, $\mathscr{G}_2 = (\mathscr{V}, \mathscr{E}_2)$, we define their union $\mathscr{G}_1 \cup \mathscr{G}_2$ to be the graph $(\mathscr{V}, \mathscr{E}_1 \cup \mathscr{E}_2 )$, i.e. the graph that contains all edges appearing in either graph.}

Given two nodes $i,j\in\mathscr{V}$, an ordered sequence of vertices $v_k, k = 1, ..., n$ with $v_1=i, v_n=j$ and  $(v_k,v_{k+1})\in\mathscr{E} \,\forall k$ is said to be a \emph{path from node $i$ to the node $j$}. A path is said to be \emph{cycle} if node $i$ equals node $j$, no edges are repeated, and the nodes $i$ and $j-1$  are distinct.

\tb{For an undirected graph, the following concepts are recalled from \cite{PakazadHanssonAndersenEtAl2015,VandenbergheAndersen2015}.  A graph is said to be a \emph{tree} if it is connected and does not contain cycles. 
 A \emph{clique} $\mathscr{C}\subset\mathscr{V}$ of the graph $\mathscr{G}$ is a maximal set of nodes that induces a complete (fully connected) subgraph on $\mathscr{G}$. A \emph{chord} of a cycle is any edge joining two nonconsecutive nodes.  A graph is said to be \emph{chordal} if every cycle of length greater than three has a chord. The importance of a graph being chordal is that it has a tree-decomposition into cliques  \cite[Proposition~12.3.11]{Diestel2005} such a tree is said to be a \emph{clique tree} and it is denoted as $\mathscr{T}(\mathscr{G})$.}

\subsection{Networked System Definition}\label{sec:network}

\tb{In this paper, we consider systems made up of a network of $N\in\mathbb{N}$ nodes. Interconnection between the nodes is defined by two directed graphs: a physical interaction network graph $\mathscr{G}_p$ and a communication network graph $\mathscr{G}_c$. Both graphs have the same vertex set  $\mathscr{V}=\mathbb{N}_{[1,N]}$ corresponding to system nodes, but may have different edge sets, as illustrated in Figure~\ref{fig:graph different illustration}. We assume both graphs have self-loops at each node, i.e. $(i,i)$ is in the edge set for all $i\in\mathscr{V}$.

\begin{figure}
	\centering
%	\includegraphics[]{}
	%\resizebox{.4\columnwidth}{!}{\input{./imgs/Graphs.pdf_tex}}
	\includegraphics[width=\columnwidth]{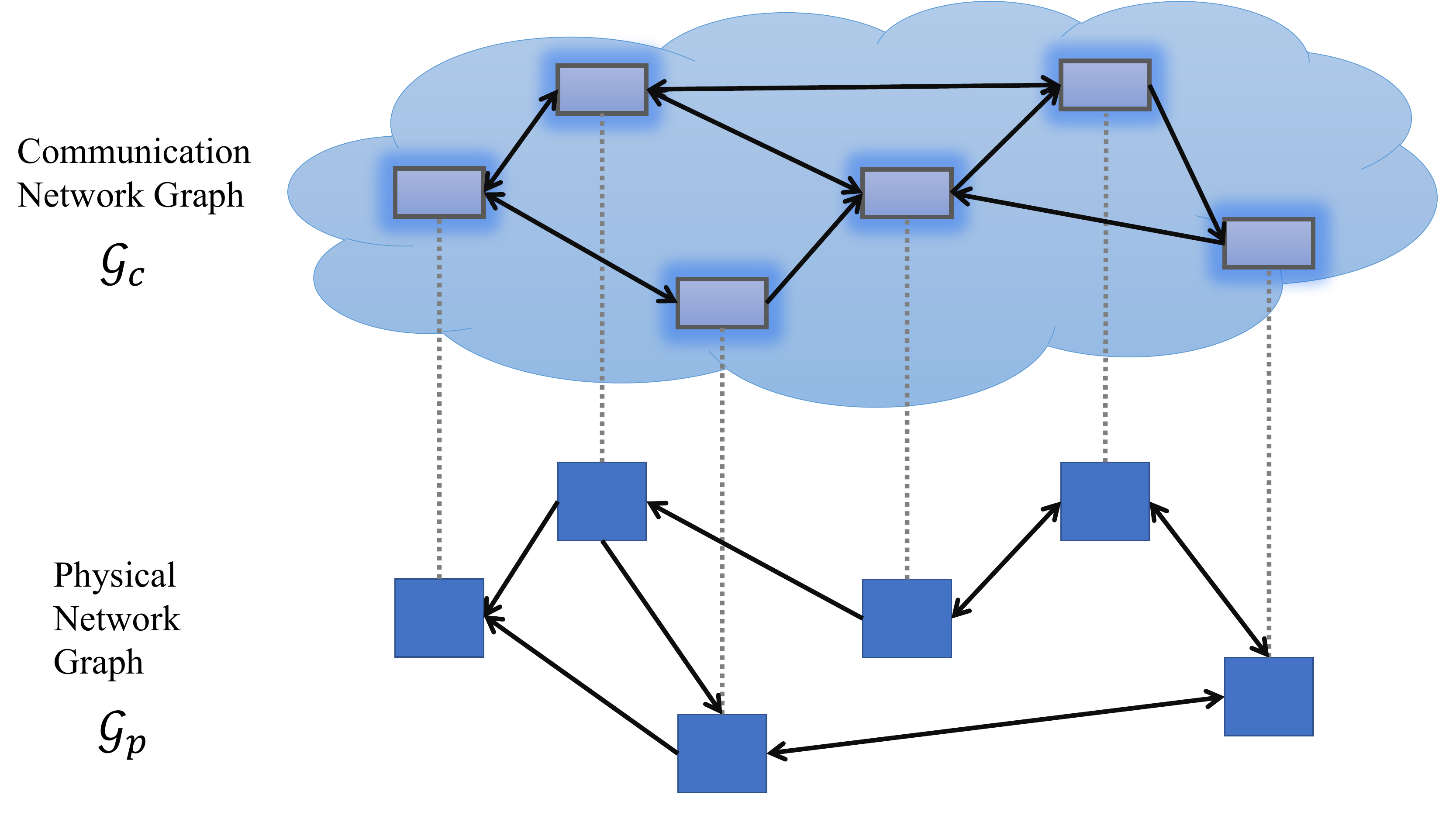}
	\caption{Illustration of the directed graphs representing the physical interaction between nodes $\mathscr{G}_p$, and the communication network $\mathscr{G}_c$. These may or may not be identical.}
	\label{fig:graph different illustration}
\end{figure}

The physical graph $\mathscr{G}_p=(\mathscr{V}, \mathscr{E}_p)$  defines the dynamical interaction between individual nodes.
At each node $i\in \mathscr{V}$, there is local state vector $x_i\in\mathbb{R}^{n_i}$ and  control input $u_i\in\mathbb{R}^{m_i}$. We define $\breve{x}_i\in\mathbb{R}^{\breve{n}_i}$ as a stacked vector of node states $x_j$ 
for which $(j,i)\in\mathscr{E}_p$, i.e. all nodes that influence $x_i$. Each node's dynamics are governed by the differential equation:
\begin{subequations}\label{eq:general system}
\begin{equation}\label{eq:subsystem}
	\dot{x}_i(t)=f_i(x_i(t),\breve{x}_i(t))+b_i(x_i(t), \breve{x}_i(t))u_i(t),
\end{equation}
We allow the case that for some nodes $i\in\mathscr{V}$, $b_i(x_i, \breve{x}_i)=0$ and $m_i=0$, i.e. node $i$ has no direct control input.
For the complete networked system we will also use the notation
\begin{equation}\label{eq:stacked system}
	\dot{x}(t)=f(x(t))+B(x(t))u(t), 	
\end{equation}
\end{subequations}
with stacked vectors and functions
\[
x = \begin{bmatrix}
	x_1\\ \vdots \\ x_N
\end{bmatrix}\in\mathbb{R}^n, ~
u = \begin{bmatrix}
	u_1\\ \vdots \\ u_N
\end{bmatrix}\in\mathbb{R}^m,  ~
f = \begin{bmatrix}
	f_1\\ \vdots \\ f_N
\end{bmatrix},
\] 
and input matrix $B = \diag(b_1, \ldots, b_N)$.
The functions $f:\mathbb{R}^n\to\mathbb{R}^n$ and $B:\mathbb{R}^n\to\mathbb{R}^{m\times n}$ are assumed to be smooth, i.e., infinitely differentiable.

}

Similarly, the graph $\mathscr{G}_c=(\mathbb{N}_{[1,N]},\mathscr{E}_c)$ specifies a communication network, in that $(j,i)\in\mathscr{E}_c$ if node $j$ can send instantaneous measurements of its state to node $i$ for control computation, and $\vec{x}_i\in\mathbb{R}^{\vec{n}_i}$ is a stacked vector of node states $x_j\in\mathbb{R}^{n_j}$ such that  $(j,i)\in\mathscr{E}_c$.

\subsection{Universal Exponential Stabilizability}\label{sec:Problem Formulation and Motivation}

\mypara A function $u^\star\in\mathcal{L}_{\mathrm{loc}}^\infty(\mathbb{R}_{\geq0},\mathbb{R}^m)$ is said to be an \emph{input signal or control for \eqref{eq:general system}}. For such a control for \eqref{eq:general system}, and for every \emph{initial condition} $x^\star(0)$, there exists a unique solution to \eqref{eq:general system} (\cite{Teschl2012}) that is denoted by $X(t,x^\star(0),u^\star)$, when evaluated at time $t$. This solution is defined over an open interval $(\underline{t},\overline{t})$, and it is said to be \emph{forward complete} if $\overline{t}=+\infty$.  \tb{We define a \textit{target trajectory} as a pair $(x^\star, u^\star)$ where $x^\star=X(\cdot,x^\star(0),u^\star)$ is a forward-complete solution of \eqref{eq:general system}. \tbb{Given a communication graph $\mathscr{G}_c$ we define $\vec{x}_i^\star$ analogously to $\vec{x}_i$ above.}

Following  \cite{Manchester2014a, manchester_control_2017}, the system \eqref{eq:general system} is said to be \textit{universally exponentially stabilizable} with rate \tbb{$\lambda>0$ } if there exists a feedback controller \tbb{$k:\mathbb R^n\times \mathbb R^n \times  \mathbb R^m  \to \mathbb R^m$ }and  a constant value $C>0$
such that for \textit{every} target trajectory $(x^\star, u^\star)$, solutions $x(t)$ of the closed-loop system
\[
\dot x(t) = f(x(t))+B(x(t))k(x(t),x^\star(t), u^\star(t)))
\]
exist for all $t\ge 0$ and satisfy 
\begin{equation}\label{eq:global exponential uniform stabilizability}
	\left|x^\star(t)-x(t)\right|\leq Ce^{-\lambda t}|x^\star(0)-x(0)|
\end{equation}
for every initial condition $x(0)\in\mathbb{R}^n$. Note that this is a stronger condition than global exponential stabilizability of a \textit{particular} target trajectory, such as the origin.

\subsection{Problem Statement}

The main objective of this paper is to find a distributed controller that can stabilize \textit{any} trajectory of a particular system. To formalize this we make the following definition:

\begin{definition}
	A state feedback controller $u(t)=k(x(t),t)$ for the system \eqref{eq:general system} is said to be to be \textit{ $\mathscr{G}_c$-admissable} if it decomposes into $N$ local feedback laws of the form 
\[
u_i(t)=k_i(x_i,\vec{x}_i,x_i^\star,\vec{x}_i^\star,u_i^\star),
\] 
for $i\in\mathbb{N}_{[1,N]}$. That is, each local control signal depends only on local state and target trajectory information and neighbor information communicated in accordance with $\mathscr{G}^c$.
\end{definition}

We are now ready to state formally the distributed control problem we consider in this paper.

\begin{problem}\label{problem formulation}
For the system \eqref{eq:general system}, find a $\mathscr{G}_c$-admissable state feedback controller such that for any target trajectory $(x^\star, u^\star)$, the closed-loop system satisfies \eqref{eq:global exponential uniform stabilizability} for almost all $x(0)\in\mathbb{R}^n$.
\end{problem}

The ``almost all $x(0)\in\mathbb{R}^n$'' condition simplifies the resulting CCM control construction, however the result can be extended to ``all $x(0)\in\mathbb{R}^n$'' by the sampled-data controller constructed in \cite{manchester_control_2017}.

}
%\mypara In other words, Problem~\ref{problem formulation} states that each component of the controller depends on the states prescribed by the graph $\mathscr{G}_c$.

%\begin{example}
%Consider the network composed of systems described by the following equation.
%\begin{equation}\label{eq:example}
%	\scalebox{0.95}{$\left\{\begin{array}{rcl}
%		\dot{x}_i&=&-x_i-x_i^3+y_i^2 + 0.01\left(x_{i-1}^3 - 2x_i^3 +x_{i+1}^3\right)\\
%		\dot{y}_i&=&u_i\;.
%	\end{array}\right.$}
%\end{equation}
%As explained in Section \ref{sec:Illustration}, when more than eight of these systems are connected, the CCM approach was unable to provide unconstrained controller on a standard desktop machine, due to memory constraints. However, by utilizing a separable structure (see definition below), up to 512 of theses systems can be connected and the distributed controller can be computed. 	
%\end{example} 

\subsection{Differential Dynamics and Control Contraction Metrics}\label{sec:background}

\mypara We  recall some standard facts from Riemannian geometry (see e.g. \cite{Boothby1986} for a complete development). A \emph{Riemannian metric on $\mathbb{R}^n$} is a symmetric positive-definite bilinear form that depends smoothly on $x\in\mathbb{R}^n$. In a particular coordinate system, for any pair of vectors $\delta_0,\delta_1$ of $\mathbb{R}^n$ the \emph{metric} is defined as the inner product $\langle\delta_0,\delta_{ 1}\rangle_x=\delta_0^T M(x)\delta_1$, where $M:\mathbb{R}^n\to\mathbb{R}^{n\times n}$ is a smooth function. Consequently, ``local'' notions of norm $|\delta_0|_x^2=\langle\delta_0,\delta_0\rangle_x=:V(x,\delta_0)$ and orthogonality $\langle\delta_0,\delta_1\rangle_x=0$ can be defined on the tangent space. The metric is said to be \emph{bounded} if there exists constants  $\underline{m}>0$ and $\overline{m}>0$ such that, for all $x\in\mathbb{R}^n$, $\underline{m}I_n\leq M(x)\leq \overline{m}I_n$, where $I_n\in\mathbb{R}^{n\times n}$ is the identity matrix. 

\mypara Let $\Gamma(x_0,x_1)$ be the set of piecewise-smooth curves $c:[0,1]\to\mathbb{R}^n$ connecting $x_0=c(0)$ to $x_1=c(1)$. The Riemannian \emph{energy} of $c$ is
\begin{align*}
	e(c)=&\ \int_0^1\left|c_s(s)\right|^2_{c(s)}\,ds = \int_0^1 V(c(s),c_s(s))\,ds\;,
\end{align*}
where the notation $c_s$ stands for the derivative $\tfrac{\partial c}{\partial s}$. The Riemannian energy between $x_0$ and $x_1$, denoted as $e(x_0,x_1)$, is defined as the minimal energy of a curve connecting them:
\begin{equation}\label{eq:geodesic formulation}
	e(x_0,x_1)=\inf_{c\in\Gamma(x_0,x_1)}e(c)\;.
\end{equation} 
This curve is smooth and is referred to as a {\em geodesic}.

\mypara Along each solution of \eqref{eq:general system}, one can define the \emph{differential} (a.k.a. variational or prolonged) dynamics:
\begin{equation}\label{eq:general system:differential}
	\dot{\delta}_x=A(x,u)\delta_x+B(x)\delta_u\;,
\end{equation}
where $\delta_x$ (resp. $\delta_u$) is a vector of the Euclidean space $\mathbb{R}^n$ (resp. $\mathbb{R}^m$) and the matrix $A\in\mathbb{R}^{n\times n}$ has components given by 
\begin{equation*}
	A_{jk}(x,u)=\dfrac{\partial}{\partial x_k}\left[f_j+\sum_{i=1}^{m} B_{ji}u_i\right]
\end{equation*}
for indices $j,k\in\mathbb{N}_{[1,n]}$. The differential dynamics \eqref{eq:general system:differential} describe the behaviour of tangent vectors to curves of solutions of \eqref{eq:general system}.

%\mypara The resulting system composed of Equations \eqref{eq:general system} and \eqref{eq:general system:differential} is analyzed on the ``prolonged'' state space of vectors $(x,\delta_x)\in\mathbb{R}^n\times\mathbb{R}^n$. 

\mypara Similarly to \eqref{eq:general system}, given a control $\delta_u$ for system \eqref{eq:general system:differential}, the solution to \eqref{eq:general system:differential} computed at time $t\geq0$, along solutions $(x(t),u(t))$ of \eqref{eq:general system}, and issuing from the  initial condition $\delta_x\in\mathbb{R}^n$ is denoted by $\Delta_x(t,x(0),\delta_x(0),u,\delta_u)$. 

%\mypara Lyapunov stability notions of solutions to \eqref{eq:general system:differential} are similar to those of linear time-varying systems (LTVS) (see \cite{Hespanha:2009} for more information on LTVS). 

%\mypara The importance of the stability of \eqref{eq:general system:differential} for system \eqref{eq:general system} can be understood as follows. Given a particular control signal $u(\cdot)$ for system \eqref{eq:general system}, $\delta_u\equiv0$. If the Euclidean norm of every solution $\Delta_x(t,x,\delta_x,u,0)$ converges to zero exponentially, as the time tends to infinity, then every pair of solutions to \eqref{eq:general system} under the input $u$ converge to each other exponentially. The interested reader may address \cite{Lohmiller1998} and references therein for further details.

\mypara A sufficient condition for the stability of \eqref{eq:general system:differential} is provided by analyzing the derivative of a particular function along the solutions of systems \eqref{eq:general system} and \eqref{eq:general system:differential} \cite{Lohmiller1998}. %This function is recalled from \cite{Forni2014} and \cite{Manchester2014a}.

\begin{definition}
		 A bounded metric $V:\mathbb{R}^n\times\mathbb{R}^n\to\mathbb{R}_{\geq0}$ is called a \emph{contraction metric} for \eqref{eq:general system} if, for any control $u$ for system~\eqref{eq:general system}, there exists a scalar $\lambda>0$ such that the inequality
	 	 \begin{equation}
	 	 	\tfrac{d}{dt}V(x(t),\delta_x(t))\leq -2\lambda V(x(t),\delta_x(t))
 	 	 \end{equation}
 	 	 holds, where $x(t):=X(t,x(0),u)$ and $\delta_x(t):=\Delta_x(t,x(0),\delta_x(0),u,0)$, for every pair $(x,\delta_x)\in\mathbb{R}^n\times\mathbb{R}^n$.
\end{definition}
In particular, a metric $V(x,\delta_x)=\delta_x^T M(x)\delta_x$ is a contraction metric for \eqref{eq:general system} if the following linear matrix inequality 
\begin{equation}\label{eq:contraction_M}
\dot M(x)+A(x,u)M(x)+M(x)A(x,u)\preceq -2\lambda M(x)
\end{equation}
holds for all $x, u$ \cite{Lohmiller1998}. Since $\dot M =\partial_{f(x)+B(x)u}M(x)$ and $A(x,u)$ are affine in each control input $u_i$, this implies that the corresponding coefficient matrices must be zero:
\begin{equation}\label{eq:killing}
		\partial_{b_i}{M(x)}+\tfrac{\partial b_i}{\partial x}^T M(x)+M(x)\tfrac{\partial b_i(x)}{\partial x}	=0
		\end{equation}
		for each $i \in\mathbb N_{[1,m]}$, which means $b_i$ are Killing vectors for the metric $M$. In that case, the inequality \eqref{eq:contraction_M} is equivalent to
\begin{equation}\label{eq:Artstein-Sontag:inequality}
			\partial_{f}{M(x)}+\tfrac{\partial f}{\partial x}^T M(x)+M(x)\tfrac{\partial f}{\partial x}\preceq -2\lambda M(x).
\end{equation}
In the remainder of the paper, we will often drop explicit dependence on $x$ of $M(x)$ and other matrices for brevity, but these matrices are always state dependent unless explicitly stated otherwise.

\mypara The existence a contraction metric for system \eqref{eq:general system} implies that every two solutions to this system converge to each other exponentially with rate $\lambda$. To the authors knowledge, this was first proven in \cite{Lewis1949} using Finsler metrics, a more general class than Riemannian metrics. \tb{The paper \cite{Forni2014} introduced the concept of a Finsler-Lyapunov function to further investigate relationships between Finsler structures and differential notions of stability and contraction.

Contraction analysis was extended to constructive control design in \cite{manchester_control_2017} by introducing the concept of a \textit{control contraction metric}.

%\begin{definition}[{\cite{Manchester2014a}}]
%	A metric for system \eqref{eq:general system} is said to be a \emph{control contraction metric for system \eqref{eq:general system}} if there exists a constant value $\lambda>0$ such that the condition
%	\begin{subequations}\label{eq:Arstein-Sontag}
%		\begin{equation}\label{eq:}
%			\delta_x\neq0,\quad\delta_x^T M(x)B(x)=0
%		\end{equation}
%		implies that the inequality
%		\begin{equation}\label{eq:Artstein-Sontag:inequality}
%			\delta_x^T(\dot{M}+A^T M+MA)\delta_x<\ -2\lambda \delta_x^T M\delta_x
%		\end{equation}
%		holds, where $\dot{M}:=\partial_{f+Bu}M$.
%	\end{subequations}
%\end{definition}

\begin{definition}[{\cite{manchester_control_2017}}]
	A bounded metric is said to be a \emph{control contraction metric for system \eqref{eq:general system}} if \eqref{eq:killing} holds and there exists a constant value $\lambda>0$ such that for $\delta_x\neq 0$ we have the implication	
		\begin{equation}\label{eq:Arstein-Sontag}
			\delta_x^T MB=0 \Rightarrow \delta_x^T\left(\partial_{f}{M}+\tfrac{\partial f}{\partial x}^TM+M\tfrac{\partial f}{\partial x}+2\lambda M\right)\delta_x<0.
		\end{equation}
%		implies that \begin{equation}\label{eq:Artstein-Sontag:inequality}
%			\delta_x^T\left(\partial_{f}{M}+\tfrac{\partial f}{\partial x}^T M+M\tfrac{\partial f}{\partial x}\right)\delta_x<\ -2\lambda \delta_x^T M\delta_x
%		\end{equation}
\end{definition}
\tbb{Condition \eqref{eq:Arstein-Sontag} } can be interpreted as the requirement that the system be contracting \tbb{in all } directions \textit{orthogonal} to the span of the control inputs. 
It was shown in \cite{manchester_control_2017} that this is equivalent to the existence of a differential feedback gain $\delta_u=K(x)\delta_x$ for which 
\begin{equation}\label{eq:CCM MK}
	\dot M +(A+BK)^TM+M(A+BK)+2\lambda M \prec 0
\end{equation}
for all $x, u$, which leads to the following control design method.

\paragraph*{Step 1} (Offline LMI computation) The inequality \eqref{eq:CCM MK} is equivalent (see \cite{manchester_control_2017})  to the existence of a bounded ``dual metric'' $W:\mathbb{R}^n\to\mathbb{R}^{n\times n}$ such that $W(\cdot)=W(\cdot)^T\succ0$ and a function $Y:\mathbb{R}^n\to\mathbb{R}^{m\times n}$ satisfying the following linear matrix inequality 
\begin{equation}\label{eq:MI formulation}
	-\dot{W}+AW+WA^T+BY+(BY)^T+2\lambda W\prec0
\end{equation}
for all $(x,u)\in\mathbb{R}^n\times\mathbb{R}^m$. Note that \eqref{eq:MI formulation} is linear in the matrix functions $W$ and $Y$. Consequently, for polynomial systems, the pointwise LMI \eqref{eq:MI formulation} can be solved via sum of squares programming \cite{Parrilo2003}. For non-polynomial systems, these constraints could be approximately satisfied either via polynomial approximation of dynamics, bounding of dynamics via linear differential inclusions \cite{boyd1994linear}, or via gridding the state/input space.

\mypara Once a solution to LMI~\eqref{eq:MI formulation} has been computed, the function defined, for every $(x,\delta_x)\in\mathbb{R}^n\times\mathbb{R}^n$, by
	\begin{equation}\label{eq:differential feedback law}
		\delta_u=Y(x)W^{-1}(x)\delta_x:=K(x)\delta_x
	\end{equation}
	is a differential feedback law that renders the origin globally exponentially stable for system~\eqref{eq:general system:differential} in closed loop with $\delta_u$.

\paragraph*{Step 2} (Online controller computation). The feedback law for system \eqref{eq:general system} can be obtained by integration as follows. 
\begin{enumerate}
	\item Compute a minimal geodesic:
\begin{equation}\label{eq:ccm geo min}
\gamma =\arg\min_{c\in\Gamma(x^\star(t),x(t))}e(c)\
\end{equation}
\item Integrate the differential controller
\begin{align}
	u(t)&=k(x(t),x^\star(t),u^\star(t))\notag\\&=u^\star(t)+\int_0^1 K(\gamma(t,s))\gamma_s(t,s)\,ds,\label{eq:contracting feedback law}
\end{align}
\end{enumerate}
For a bounded metric, the Hopf-Rinow theorem (cf. \cite[Theorem 7.7]{Boothby1986}) ensures that for every pair $x(t), x^\star(t)$, there exists a minimizing geodesic $\gamma$ solving \eqref{eq:ccm geo min}. Furthermore, for each $x^\star(t)$ this geodesic is unique and a smooth function of $x(t)$ for almost all $x(t)$.}

%Note about existence a.e.
\begin{remark}
In the case that the metric $M=W^{-1}$ is independent of $x$, the unique minimal geodesic is a straight line joining $x$ to $x^\star$. Furthermore in the case that $Y$ and hence $K$ are also independent of $x$, the above controller reduces to a linear feedback law
\[
u(t)= k(x(t),x^\star(t),u^\star(t))= u^\star(t)+K(x(t)-x^\star(t)),
\]
so \eqref{eq:contracting feedback law} can be thought of as a natural generalisation of linear feedback synthesis to nonlinear systems.  
\end{remark}
\tb{
\begin{remark}\label{rem:convex}
	For Theorem \ref{thm:main result}, we have assumed that \eqref{eq:MI:SSCCM} holds for all $x\in\mathbb{R}^n$. If \eqref{eq:MI:SSCCM} holds only on a subset $S \subset \mathbb{R}^n$, then it is necessary to ensure that $\gamma(s)$ remains in this subset for all $s\in[0,1]$. This is the case if both $x$ and $x^\star$ are in $S$ for all $t$, and $S$ is geodesically convex. For constant metrics, geodesic convexity is the standard convexity in $\mathbb{R}^n$, since geodesics are straight lines.
\end{remark}}

%\mypara From the above steps, even if the problem of imposing a particular structure on $K$ to correspond to the constraints of Problem \ref{problem formulation} was tractable, the computation of the controller would not necessarily satisfy these constraints. This is due to the fact that the solutions to the optimization problem \eqref{eq:geodesic formulation} cannot generally be computed in a distributed manner. In this paper, these limitations are addressed by imposing a separable structure over $W$ and hence $M$.

\section{Convex Design of Distributed Controllers}\label{sec:Design of Decentralized Controllers}

\tb{In this section, we present the main results of the paper, extending the CCM methodology described above to distributed control design. } Inspired by the notion of sum-separable Lyapunov functions (see e.g. \cite{Dirr2015}), we introduce the following class of control contraction metrics:
\begin{definition}\label{def:SSCCM}
	A control contraction metric $V$ for system \eqref{eq:subsystem} is called \emph{sum-separable} if it can be decomposed like so:	\begin{equation*}
		V(x,\delta_x)= \sum_{i=1}^N V_i(x_i,\delta_{x_i}):=\sum_{i=1}^N \delta_{x_i}^T M_i(x_i)\delta_{x_i},
	\end{equation*}
	where, for each index $i\in\mathbb{N}_{[1,N]}$, and for every $(x_i,\delta_{x_i})\in\mathbb{R}^{n_i}\times\mathbb{R}^{n_i}$, the function $V_i(x_i,\delta_{x_i})$ is a metric on $\mathbb{R}^{n_i}$.
\end{definition}

\mypara In other words, Definition~\ref{def:SSCCM} states that the metric $V$ on $\mathbb{R}^n$ can be  decomposed into a sum of smaller components, each of which depends only on the {\em local} information $x_i, \delta_{x_i}$. Accordingly, we define the following class of matrix functions:

\tb{
\begin{definition}
For the system \eqref{eq:subsystem}, let $\Pi$ denote the set of matrix functions $\mathbb{R}^n\to\mathbf{S}_{>0}^n$ with the following properties:\begin{enumerate}
	\item Each $M(x)\in \Pi$ is block diagonal with $N$ blocks, and the $i^{th}$ block has dimension $n_i$.
	\item The $i^{th}$ block of $M(x)$ is a function only of $x_i$.
\end{enumerate}
\end{definition} 
}

I.e. a sum separable CCM $V(x)=\delta^T M(x)\delta$ has $M(x)\in \Pi$. Note that $M(x)\in\Pi \Leftrightarrow M(x)^{-1}\in \Pi$.

\mypara To address the information constraints on $k$ described in Problem~\ref{problem formulation}, the structure of the feedback defined by Equation~\eqref{eq:contracting feedback law} is obtained by imposing a suitable constraint on the function $Y$ to be satisfied together with the LMI \eqref{eq:MI formulation}. 

\tb{
\begin{definition}
Let $\Xi$ be the set of functions $Y:\mathbb{R}^n\times\mathbb{R}^m\to\mathbb{R}^{m\times n}$ with components defined by
\begin{equation*}
		Y_{ij}\begin{cases}
		=Y_{ij}(x_i,\vec{x}_i)\in\mathbb{R}^{m_i\times n_i},&\text{if}\ (i,j)\in\mathscr{E}_c,\\
		\equiv0_{m_i\times n_i},&\text{otherwise},
		\end{cases}
\end{equation*}
for every $i,j\in\mathscr{V}$.	
\end{definition} }

\mypara The set $\Xi$ defines the topology of the differential feedback law to be designed for system~\eqref{eq:general system:differential} and the dependence of each element of the matrix $Y$ on the state-space variables. %Note that, whenever system \eqref{eq:general system} is linear, the map $Y_{pq}$ is constant. 

\begin{theorem}\label{thm:main result}
	For the system \eqref{eq:general system} and differential dynamics \eqref{eq:general system:differential}, suppose there exist	$W(x)\in\Pi, \quad
			Y(x)\in\Xi$		
			satisfying the following pointwise linear matrix inequality:	
			\begin{equation}\label{eq:MI:SSCCM}
			-\dot{W}+AW+WA^T+BY+(BY)^T+2\lambda W\prec0		\end{equation}	
		for all $x\in\mathbb{R}^n, u\in\mathbb{R}^m$. 
	Then, $M(x)=W(x)^{-1}$ defines a separable control contraction metric for system~\eqref{eq:subsystem} and the controller \eqref{eq:contracting feedback law} with $K(x)=Y(x)W(x)^{-1}$ solves Problem \ref{problem formulation}.
	\end{theorem}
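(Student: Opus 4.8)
The plan is to verify the two claims in turn: first that $M(x) = W(x)^{-1}$ is a sum-separable CCM in the sense of Definitions~\ref{def:SSCCM}, and second that the resulting controller \eqref{eq:contracting feedback law} is $\mathscr{G}_c$-admissable and achieves \eqref{eq:global exponential uniform stabilizability}. The separability of $M$ is almost immediate: since $W(x)\in\Pi$ and $M(x)\in\Pi \Leftrightarrow M(x)^{-1}\in\Pi$, the metric $M = W^{-1}$ is block-diagonal with the $i^{\mathrm{th}}$ block depending only on $x_i$, so $V(x,\delta_x)=\sum_i \delta_{x_i}^T M_i(x_i)\delta_{x_i}$ has the required form. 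The bulk of the first part is to recognize that \eqref{eq:MI:SSCCM} is exactly the standard CCM dual-metric LMI \eqref{eq:MI formulation} with the additional sparsity/structure constraints $W\in\Pi$, $Y\in\Xi$ layered on top; hence, by the equivalences already established in \cite{manchester_control_2017} and recalled in the excerpt (\eqref{eq:CCM MK} $\Leftrightarrow$ \eqref{eq:MI formulation}, and the Killing condition \eqref{eq:killing} together with the Artstein--Sontag-type implication \eqref{eq:Arstein-Sontag}), $M=W^{-1}$ is a control contraction metric and $K=YW^{-1}$ is an admissible differential feedback satisfying \eqref{eq:CCM MK}. One should note that congruence by $W^{-1}$ preserves the $\prec 0$ sign, and that boundedness of the metric follows from the assumption $\underline m I \le M \le \overline m I$ built into the definition of a (dual) metric in $\Pi$; strictly, one may need to either assume the uniform bounds or remark that they hold when the LMI is feasible on a compact gridding, as in Remark~\ref{rem:convex}.

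Next I would analyze the online controller \eqref{eq:contracting feedback law}. The universal exponential stabilizability estimate \eqref{eq:global exponential uniform stabilizability} for almost all $x(0)$ is inherited directly from the general CCM construction: by the Hopf--Rinow theorem a minimizing geodesic $\gamma(t,\cdot)$ exists between $x^\star(t)$ and $x(t)$ and is unique and smooth in $x(t)$ for almost all $x(t)$; differentiating the Riemannian energy $e(x^\star(t),x(t))$ along the closed loop and using \eqref{eq:CCM MK} pointwise along the geodesic gives $\tfrac{d}{dt}e \le -2\lambda e$, and the bounded-metric inequalities convert this into \eqref{eq:global exponential uniform stabilizability} with $C=\sqrt{\overline m/\underline m}$. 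This argument is verbatim from \cite{manchester_control_2017}, so I would cite it rather than reproduce it; the only new thing to check is that feasibility of \eqref{eq:MI:SSCCM} over all $x\in\mathbb R^n$ makes the construction global (consistent with Remark~\ref{rem:convex}).

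The genuinely new step — and the one I expect to be the main obstacle — is showing the controller is $\mathscr{G}_c$-admissable, i.e. that each $u_i(t)$ depends only on $(x_i,\vec x_i,x_i^\star,\vec x_i^\star,u_i^\star)$. This requires two structural facts. First, because $M=W^{-1}\in\Pi$ is block-diagonal with the $i^{\mathrm{th}}$ block a function of $x_i$ only, the Riemannian energy \emph{decouples} across nodes: $e(x,x^\star) = \sum_i e_i(x_i,x_i^\star)$ where $e_i$ is the energy in the metric $M_i$ on $\mathbb R^{n_i}$, and the minimizing geodesic $\gamma$ factors componentwise, $\gamma_i(t,s)$ being the geodesic from $x_i^\star(t)$ to $x_i(t)$ under $M_i$ — this needs a short argument that minimizing a sum of independent energies with independent endpoint constraints is achieved by minimizing each summand, and that each $\gamma_i$ depends only on $(x_i, x_i^\star)$. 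Second, writing the $i^{\mathrm{th}}$ block-row of the integral in \eqref{eq:contracting feedback law}, $u_i(t) = u_i^\star(t) + \int_0^1 \sum_j K_{ij}(\gamma(t,s))\,\gamma_{j,s}(t,s)\,ds$, one uses $Y\in\Xi$: $K = YW^{-1}$ and since $W^{-1}\in\Pi$ is block-diagonal, $K_{ij} = Y_{ij}\,(W^{-1})_{jj}$, which vanishes unless $(i,j)\in\mathscr E_c$ and otherwise depends only on $(x_i,\vec x_i)$ through $Y_{ij}$ and on $x_j$ through $(W^{-1})_{jj}$. Combining, the integrand for $u_i$ involves only $\gamma_j$ for $j$ with $(j,i)\in\mathscr E_c$, each of which is a function of $(x_j,x_j^\star)$, so $u_i$ is a function of $x_i$, $\vec x_i$, $x_i^\star$, $\vec x_i^\star$ and $u_i^\star$ as required. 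The care needed is in the geodesic-decoupling claim and in tracking exactly which arguments enter $K_{ij}$ and $\gamma_{j,s}$; everything else follows from the cited CCM theory.
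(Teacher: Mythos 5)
Your proposal is correct and follows essentially the same route as the paper's proof: separability of $M=W^{-1}$ via closure of $\Pi$ under inversion, decomposition of the Riemannian energy into independent per-node minimizations so each geodesic component $\gamma_i$ depends only on $(x_i,x_i^\star)$, preservation of the $\Xi$ sparsity structure under $K=YW^{-1}$ because $W^{-1}$ is block diagonal with local blocks, and delegation of the exponential stabilization claim to the CCM machinery of \cite{manchester_control_2017}. The extra details you flag (the explicit factorization $K_{ij}=Y_{ij}(W^{-1})_{jj}$ and the observation that a sum of independent energies with independent endpoints is minimized termwise) are exactly the steps the paper carries out, just stated slightly more tersely there.
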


\begin{proof}
To prove the theorem we first establish $\mathscr{G}_c$-admissability of the controller, and then that it achieves the desired form of stability.

By assumption, $W\in\Pi$, so we also have $M=W^{-1}\in \Pi$, and therefore $M$ defines a sum-separable metric, as per Definition \ref{def:SSCCM}.

At a particular time $t$, the first stage of control calculation is to compute a minimum-energy geodesic from $x(t)$ to $x^\star(t)$. Because $M$ is sum-separable, the energy of any curve $c:[0,1]\to\mathbb{R}^n$ satisfies the following equation
	\begin{equation}\label{eq:geodesic sum}
		e(c)=\int_0^1\sum_{i=1}^N V_i\left(c_i(s),\frac{\partial c_i}{\partial s}(s)\right)\,ds\;.
	\end{equation}
where $c_i:[0,1]\to\mathbb{R}^{n_i}$ denotes the $i^{th}$ component of the curve $c$, connecting $x_i(t)$ to $x^\star_i(t)$. Defining the energy \tbb{of } each component $c_i$ as
\[
e(c_i) = \int_0^1V_i\left(c_i(s),\frac{\partial c_i}{\partial s}(s)\right)ds
\]
and exchanging the order of integration and summation we have $e(c) = \sum_{i=1}^N e_i(c_i)$. Hence computing the curve of minimal energy $e(c)$ decomposes into computing the component curves $c_i$ of minimal energy $e_i(c_i)$, each of which depends only on local information $x_i(t), x^\star_i(t)$.

Hence each local controller at node $i$, with knowledge of $x_i(t), \vec x_i(t), x_i^\star(t), \vec x_i^\star(t)$, can compute the minimal geodesics $\gamma_i(t)$ and $\vec \gamma_i(t)$, referring to the stacked vector function of geodesics $\gamma_j(t)$ for $j:(j,i)\in\mathscr{E}^c$.

The second stage of the control computation is integration of the differential control law. Since $M\in\Pi$, i.e. both block diagonal and with local state-dependence of the blocks, the transformation $K(x)=Y(x)W(x)^{-1}=Y(x)M(x)$ preserves the sparsity pattern and local dependence of $Y(x)$, so $K(x)\in\Xi$. This means that the $i,j$ block of $K(x)$ can be written as $K_{ij}(x_i, \vec x_i)$.

Then,  each local agent computes the control signal, where $t$-dependence of signals has been dropped for simplicity:
\begin{align}
	u_i=&u_i^\star+\sum_{j:(j,i)\in\mathscr{E}^c}\int_0^1 K_{ij}(\gamma_i(s), \vec\gamma_i(s))\frac{\partial\gamma_{j}}{\partial s}(s)\,ds.
\end{align}
By construction, this control signal satisfies $\mathscr{G}^c$-admissability.

%
%	
%\mypara Since $M$ has a block-diagonal structure, it can be split into $N$ separate optimization problems depending on
%different decision variables. Moreover, the positive definiteness of $M$ implies also that the minimum of Equation \eqref{eq:geodesic sum} corresponds to a minimum of each component $i\in\mathbb{N}_{[1,N]}$:
%	\begin{equation*}
%		\gamma_i=\arg\min e(c_i),\forall i\in\mathbb{N}_{[1,N]}\,\Leftrightarrow\,\gamma=\arg\min e(c).
%	\end{equation*}
%	Thus, the minimization of Equation~\eqref{eq:geodesic sum} can be computed in parallel. 
%	
%\mypara 	From Hopf-Rinow theorem (cf. \cite[Theorem 7.7]{Boothby1986}), for every index $i\in\mathbb{N}_{[1,N]}$, and for every $x_i$ and $x_i^\star\in\mathbb{R}^{n_i}$, there exists a solution to the optimization problem
%	\begin{align*}
%		\gamma_i=\arg\min_{c_i\in\Gamma(x_i^\star,x_i)} e(c_i)\;.
%	\end{align*}
%	This solution is a geodesic smooth curve $\gamma_i:[0,1]\to\mathbb{R}^{n_i}$ connecting $x_i$ to $x_i^\star\in\mathbb{R}^{n_i}$.
%

%
% Apply the coordinate change $\eta=M\delta$ and define the matrices $W=M^{-1}$ and $K=YW^{-1}$. 
	
\mypara	The LMI~\eqref{eq:MI:SSCCM} implies that the inequality
	\begin{equation*}
		\delta_x^T\bigg(\dot{M}+(A+BK)M+M(A+BK)^T+2\lambda M\bigg)\delta_x\leq0
	\end{equation*}
	holds, for every $(x,\delta_x,u)\in\mathbb{R}^n\times\mathbb{R}^n\times\mathbb{R}^m$. Thus, $M$ is a control contraction metric for system \eqref{eq:general system} and, according to the main result of \cite{manchester_control_2017}, \eqref{eq:differential feedback law} is a differential feedback rendering the equilibrium of the origin globally exponentially stable for system~\eqref{eq:general system:differential} in closed loop.
	\end{proof}

\begin{corollary}[{\cite{SteinShiromotoManchester2016}}]\label{cor:main result}
	Assume that the matrix $B$ satisfies the identity
	$
		\partial_BW-\tfrac{\partial B}{\partial x}W-W\tfrac{\partial B}{\partial x}^T\equiv0
$
	and there exist $N$ functions $\rho_i:\mathbb{R}^{n_i+\vec{n}_i}\to\mathbb{R}$ such that the matrix inequality
	\begin{equation}\label{eq:with R}
		-\partial_fW+\tfrac{\partial f}{\partial x}W+W\tfrac{\partial f}{\partial x}^T-BRB^T+2\lambda W\prec0
	\end{equation}
	holds for all $(x,u)\in\mathbb{R}^n\times\mathbb{R}^m$, where $R(x)=\mathbin{\diag}(\rho_1(x_1)I_{n_1},\ldots,\rho_N(x_N)I_{n_N})$ for some scalar functions $\rho_i(x_i), i = 1, .., N$. Then, $W$ is a sum-separable control contraction metric for system~\eqref{eq:general system} and there exists a solution to Problem~\ref{problem formulation} with fully decentralized information structure, i.e. $\mathscr{G}^c$ has no edges $(i,j), i\ne j$.
\end{corollary}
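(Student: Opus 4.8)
The strategy is to obtain Corollary~\ref{cor:main result} as a special case of Theorem~\ref{thm:main result} by committing to the explicit choice $Y(x) = -\tfrac12 R(x) B(x)^T$, with $R$ as in the statement, and checking that the two hypotheses of the corollary are then precisely the hypotheses of the theorem. The first and main step is to show that the identity $\partial_B W - \tfrac{\partial B}{\partial x} W - W \tfrac{\partial B}{\partial x}^T \equiv 0$ removes all $u$-dependence from the LMI \eqref{eq:MI:SSCCM}. Writing $\dot W = \partial_f W + \sum_i u_i\,\partial_{b_i} W$ and $A(x,u) = \tfrac{\partial f}{\partial x} + \sum_i u_i\,\tfrac{\partial b_i}{\partial x}$ (with $b_i$ the $i$-th column of $B$, as in \eqref{eq:killing}), the left-hand side of \eqref{eq:MI:SSCCM} splits into the $u$-free matrix $-\partial_f W + \tfrac{\partial f}{\partial x} W + W \tfrac{\partial f}{\partial x}^T + BY + (BY)^T + 2\lambda W$ plus $\sum_i u_i\big(-\partial_{b_i} W + \tfrac{\partial b_i}{\partial x} W + W \tfrac{\partial b_i}{\partial x}^T\big)$. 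Each bracketed $u_i$-coefficient is the congruence of the $i$-th Killing relation \eqref{eq:killing} by $W = M^{-1}$: using $\partial_{b_i} W = -W(\partial_{b_i} M)W$, obtained by differentiating $MW = I$ along $b_i$, one sees that \eqref{eq:killing} holding for all $i$ is equivalent to the stated identity on $B$, hence to the vanishing of these coefficients. Therefore \eqref{eq:MI:SSCCM} is feasible for all $(x,u)$ exactly when its $u$-free part is negative definite.

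Next, since $R = R^T$ we have $BY + (BY)^T = -\tfrac12(BRB^T + BR^TB^T) = -BRB^T$, so the $u$-free part above is exactly the left-hand side of \eqref{eq:with R}; thus the hypotheses of the corollary coincide with those of Theorem~\ref{thm:main result} for this particular $Y$, once we check $Y \in \Xi$ for a communication graph $\mathscr{G}^c$ having no edges $(i,j)$, $i\ne j$. This follows from block structure: $B = \diag(b_1,\dots,b_N)$ and $R = \diag(\rho_1 I_{n_1},\dots,\rho_N I_{n_N})$ are block diagonal, so $Y$ is block diagonal with $Y_{ij} = 0$ for $i\ne j$ and $Y_{ii} = -\tfrac12\rho_i b_i^T$ depending only on $x_i$; hence $Y \in \Xi$ with the edgeless off-diagonal pattern, as required for a fully decentralized design.

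With these two facts, Theorem~\ref{thm:main result} applies verbatim: $W$ (equivalently $M = W^{-1}$, since $\Pi$ is closed under inversion) is a sum-separable control contraction metric for \eqref{eq:general system}, and the controller \eqref{eq:contracting feedback law} with $K = YW^{-1}$ solves Problem~\ref{problem formulation}. Because $Y$ has no off-diagonal blocks and $K = YM$ inherits the block-diagonal, locally-dependent structure of $Y$ — this is exactly the argument in the proof of Theorem~\ref{thm:main result} — each $K_{ii}$ is a function of $x_i$ alone, so the resulting feedback has fully decentralized information structure. The one genuinely delicate point is the first step: making the cancellation of the $u$-dependent part rigorous, which requires the $M \leftrightarrow W$ passage in \eqref{eq:killing} and some care with $\tfrac{\partial}{\partial x}\!\left(B(x)u\right)$ when forming $A(x,u)$. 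Everything after that is elementary block-matrix algebra.
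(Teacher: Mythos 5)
Your proof is correct and follows essentially the same route as the paper's own argument: specialize Theorem~\ref{thm:main result} by setting $Y=-\tfrac12 RB^T$, observe that $BY+(BY)^T=-BRB^T$ so that \eqref{eq:with R} coincides with \eqref{eq:MI:SSCCM}, and note that the block-diagonal, locally dependent structure of $B$ and $R$ places $Y$ in $\Xi$ for the edgeless communication graph. The only difference is that you make explicit the cancellation of the $u$-dependent terms via the Killing-type identity on $B$ (and its equivalence to \eqref{eq:killing} under $W=M^{-1}$), a step the paper leaves implicit in the word ``equivalent.''
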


\mypara To see that Corollary \ref{cor:main result} is a particular case of Theorem~\ref{thm:main result}, note that by choosing $Y=-RB^T/2$, \eqref{eq:with R}  is equivalent to \eqref{eq:MI:SSCCM}. Furthermore, $Y$ by construction is block diagonal and the $i^{th}$ block depends only on $x_i$, hence $Y\in\Xi$. 

%\mypara Although Theorem~\ref{thm:main result} provides a methodology to design distributed controllers, \emph{a priori}, nothing has been imposed to compute the set of equations~\eqref{eq:main result} in a distributed manner. The next section provides sufficient conditions to parallelize the search for a solution to the set of equations~\eqref{eq:main result}.

\begin{remark}
In the above we have assumed that each node consists of a node state $x_i$ and a colocated node control $u_i$. However, the above strategy is easily extended to a communication structure based on separate ``state measurement nodes'' $x_i$ and ``actuation nodes'' $u_j$, and a communication networks from sensors to actuators defined by a directed bipartite graph $\mathscr{G}_c$, the adjacency matrix of which defines the sparsity structure of $Y$. For the online control computation, at each measurement node the state $x_i(t)$ is measured, and a minimal geodesic path to $x_i^\star(t)$ is computed. Then this path is communicated to each control node $j$ such that $(i,j)$ is an edge of $\mathscr{G}_c$. Then each control node can compute the control according to \eqref{eq:contracting feedback law}.	
\end{remark}

\begin{remark}
As shown in \cite{manchester_control_2017}, the Riemannian energy function then provides a useful control-Lyapunov function for any target trajectory. In particular, at each time it defines a convex set of control signals that achieve exponential contraction towards the target trajectory. 
%More precisely, from the first variation of the energy function, the time derivative of $e$ yields the equation
%\begin{align}
%	\dfrac{1}{2}De^+(x^\star,x)=&\langle \gamma_s(t,1),f(x)\rangle_{x}
%	+\langle \gamma_s(t,1),B(x)u\rangle_{x}\nonumber\\&-\langle \gamma_s(t,0),\dot{x}^\star\rangle_{x^\star}\,\label{eq:CLF derivative}
%\end{align}
%where $De^+$ stands for the  upper-right Dini derivative of the energy $e$ along the solutions $x$ and $x^\star$. With a separable CCM, the right-hand-side simplifies to the sum over $i$ of the local terms 
%%\begin{align*}
%%	\tfrac{\partial \gamma_i}{\partial s_i}(t,1)M_i(X_i)f_i(X_i,\breve{X}_i)
%%	+\tfrac{\partial \gamma_i}{\partial s_i}(t,1)M_i(X_i)B_i(X_i)u_i\\-\tfrac{\partial \gamma_i}{\partial s_i}(t,0)M_i(X_i^\star)\dot X_i^\star,
%%\end{align*}
%where each term depends only on local and neighbour information. 
This was used in \cite{wang_distributed_2017} to guarantee stability in distributed economic model predictive control.
\end{remark}

\tb{
\subsection{Conditions for Existence of a Separable CCM}
The results we have presented so far give sufficient conditions for existence of a distributed controller by way of a separable CCM. A natural question to ask is how conservative is the restriction to a separable CCM.

\mypara For linear time-invariant \textit{positive} systems, i.e. those leaving the positive orthant invariant, stability is equivalent to the existence of a separable quadratic Lyapunov function \cite{berman_nonnegative_1994}. This leads to the following simple result:

\begin{theorem}\label{thm:local}
Suppose $n_i=1$ and for a particular equilibrium condition $x_e, u_e$ of \eqref{eq:general system}, the local linearization $\dot z = A(x_e, u_e)z+ B(x_e)v$ admits a stabilizing feedback gain $K$ such that the closed-loop system matrix $\dot z = (A(x_e, u_e) + B(x_e)K)z$ is positive.	Then in a neighborhood of $(x_e, u_e)$ there exists a sum-separable contraction metric satisfying the conditions of Theorem \ref{thm:main result}.
\end{theorem}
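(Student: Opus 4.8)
The plan is to produce a \emph{constant} diagonal dual metric $W$ and a \emph{constant} matrix $Y$ for which the pointwise LMI~\eqref{eq:MI:SSCCM} holds on a neighbourhood of $(x_e,u_e)$, and then invoke Theorem~\ref{thm:main result} together with Remark~\ref{rem:convex}.

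First I would use the positivity hypothesis. Since $K$ is stabilizing, the closed-loop matrix $A_{\mathrm{cl}}:=A(x_e,u_e)+B(x_e)K$ is Hurwitz, and by assumption it is Metzler (its induced flow is a positive system). A Metzler matrix stays Metzler after adding $\lambda I$ and remains Hurwitz for $\lambda$ below the magnitude of its spectral abscissa, so by the classical diagonal-stability characterisation of positive linear systems (see, e.g., \cite{berman_nonnegative_1994}) there exist a diagonal $W=\diag(w_1,\dots,w_N)\succ0$ and a rate $\lambda>0$ with
\[
A_{\mathrm{cl}}W+WA_{\mathrm{cl}}^{T}+2\lambda W\prec0 .
\]
Because $n_i=1$, a diagonal $W$ is exactly a block-diagonal member of $\Pi$, and $M=W^{-1}$ is the candidate sum-separable metric. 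I would then set $Y:=KW$; since $W$ is diagonal and positive, $Y$ inherits precisely the sparsity pattern of $K$, so choosing the communication graph $\mathscr{G}_c$ to contain this pattern (a constant matrix trivially meets the local-dependence requirement of $\Xi$) gives $Y\in\Xi$. Evaluating the left-hand side of \eqref{eq:MI:SSCCM} at $(x_e,u_e)$ then yields $-\dot W+AW+WA^{T}+BY+(BY)^{T}+2\lambda W=A_{\mathrm{cl}}W+WA_{\mathrm{cl}}^{T}+2\lambda W\prec0$, using $\dot W\equiv0$ for a constant metric and $f(x_e)+B(x_e)u_e=0$.

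The next step is localization by continuity. For the fixed constant $W,Y$ the residual $-\dot W+AW+WA^{T}+BY+(BY)^{T}+2\lambda W$ is a smooth matrix-valued function of $(x,u)$ (the entries of $A(x,u)$ are smooth since $f,B$ are, and $\dot W\equiv0$), and strict negative definiteness is an open condition, so it persists on an open neighbourhood $S\times U$ of $(x_e,u_e)$. Shrinking $S$ to a Euclidean ball makes it convex, hence geodesically convex for the constant metric $M=W^{-1}$ (geodesics are straight lines); thus for target trajectories with $x^\star\in S$, $u^\star\in U$ the hypotheses of Theorem~\ref{thm:main result} hold on $S$ in the sense of Remark~\ref{rem:convex}, which gives that $W$ is a sum-separable CCM and that the controller \eqref{eq:contracting feedback law} with $K=YW^{-1}$ solves Problem~\ref{problem formulation} locally.

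I expect the main obstacle — and the only place the hypotheses are genuinely used — to be extracting a \emph{diagonal} Lyapunov certificate from the Metzler, Hurwitz closed loop; with $n_i=1$ this is exactly what makes the metric separable, and it is why the statement does not extend to arbitrary stabilizable linearizations. A secondary point to handle carefully is that the result is intrinsically local: a constant metric satisfies the Killing-type identity \eqref{eq:killing} globally only when $B$ is suitably structured, so the conclusion is confined to a neighbourhood and leans on Remark~\ref{rem:convex} to restrict $x$ (and, implicitly, $u$) to $S$ (resp.\ $U$). The remaining work — continuity of the LMI residual and tracking the sparsity pattern through the change of variables $K=YW^{-1}$ — is routine.
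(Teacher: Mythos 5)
Your proof is correct and follows essentially the same route as the paper's: extract a diagonal Lyapunov certificate from the Metzler--Hurwitz closed-loop matrix, verify the LMI \eqref{eq:MI:SSCCM} at $(x_e,u_e)$ with the constant $W$ and $Y=KW$, and extend to a neighbourhood by strictness and smoothness of $A,B$. You simply spell out details the paper leaves implicit (the sparsity of $Y$, geodesic convexity of the neighbourhood via Remark~\ref{rem:convex}).
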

\begin{proof}
The linear closed-loop system has a diagonal quadratic Lyapunov function $z^TPz$ taking the metric with $M=P$ and differential feedback, $\delta_u=K\delta_x$, \eqref{eq:MI:SSCCM} therefore holds at $x_e, u_e$. Since it is a strict inequality and $A, B$ are smooth functions of $x,u$, it holds in a  neighborhood of $(x_e, u_e)$.
\end{proof}
The natural nonlinear analogue of a positive system is a {\em monotone} system \cite{smith_monotone_1995}, which preserves element-wise ordering between pairs of solutions, though for monotone systems the question of the existence of a separable Lyapunov function is more subtle \cite{Dirr2015}. In \cite{Manchester2017existence} global existence of separable contraction metrics was shown for certain classes of monotone contracting nonlinear systems. In addition, the utility of naturally-separable $l^1$-type metrics have been used by several authors in the analysis of monotone system \cite{como2015throughput, coogan2016separability}. Beyond these results, to the authors' knowledge the question of how restrictive it is to require $M$ to be separable remains open.

\begin{theorem}\label{thm:necessary}
Suppose $n_i=1$ for $i\in \mathbb{N}_{[1,N]}$ and suppose there exists a feedback controller $u(t)=k(x(t),x^\star(t),u^\star(t))$ solving Problem \ref{problem formulation} such that the closed-loop system
$  	\dot x = f(x,k(x,x^\star,u^\star))
$
 is:
  \begin{enumerate}
  \item contracting with respect to a constant metric $M>0$, i.e.  \begin{equation}\label{eq:monotone cl contracting}
  M(A+BK)+(A+BK)^TM\prec -2\lambda M
  \end{equation}
  for all $x, x^\star, u^\star$, where $K = \frac{\partial k}{\partial x}$,
  	\item monotone: $(A+BK)_{ij}\ge 0$ for $i\ne j$,
  	\item linearly coupled: $(A+BK)_{ij}$ is independent of $x$ for $i\ne j$.
  \end{enumerate}
Then there exists a sum-separable contraction metric satisfying the conditions of Theorem \ref{thm:main result}.  
\end{theorem}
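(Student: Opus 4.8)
The plan is to build a \emph{constant}, block-diagonal metric $M=P$; since $n_i=1$ this automatically gives $W=P^{-1}\in\Pi$, so it only remains to exhibit a $Y\in\Xi$ making $(W,Y)$ satisfy \eqref{eq:MI:SSCCM}, after which Theorem~\ref{thm:main result} applies. Write $\bar A(x)=A(x,k(x,x^\star,u^\star))+B(x)K$ for the closed-loop system matrix (target fixed for now); this is precisely the Jacobian $\tfrac{\partial g}{\partial x}$ of the closed-loop vector field $g(x)=f(x)+B(x)k(x,x^\star,u^\star)$. The first observation is structural: because $\bar A$ is a Jacobian, $\tfrac{\partial \bar A_{ij}}{\partial x_l}=\tfrac{\partial \bar A_{il}}{\partial x_j}$, and hypothesis~(3) makes the left-hand side zero for every $l$ whenever $i\ne j$; taking $l=i$ gives $\tfrac{\partial \bar A_{ii}}{\partial x_j}=0$ for all $j\ne i$. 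Hence every diagonal entry of $\bar A$ depends only on its own coordinate, i.e. $\bar A(x)=\diag(\bar a_1(x_1),\dots,\bar a_N(x_N))+N$ with $N\ge 0$ a \emph{constant} off-diagonal matrix (hypothesis~(2)).

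Extracting the $(i,i)$ entry of \eqref{eq:monotone cl contracting} and using that the entries $\bar A_{ki}$, $k\ne i$, are constant gives a uniform bound $\bar a_i(x_i)\le\gamma_i$, so in particular $\beta_i:=\sup_{x_i}\bar a_i(x_i)$ is finite. Since $\bar a_i$ depends only on $x_i$, one may pick $x_i^{(\nu)}$ with $\bar a_i(x_i^{(\nu)})\to\beta_i$ \emph{simultaneously} over all $i$, so that $\bar A(x^{(\nu)})\to \bar A^+:=\diag(\beta_1,\dots,\beta_N)+N$. Passing to the limit in \eqref{eq:monotone cl contracting} yields $M\bar A^++(\bar A^+)^TM\preceq-2\lambda M\prec0$, so the constant matrix $\bar A^+$ is Hurwitz, and it is Metzler by construction. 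By the classical diagonal-stability theorem for Metzler Hurwitz matrices (e.g.\ \cite{berman_nonnegative_1994}) there is a constant diagonal $P\succ0$ with $P\bar A^++(\bar A^+)^TP\prec0$, hence $\preceq -\mu I$ for some $\mu>0$. Finally, for \emph{every} $x$,
\[
P\bar A(x)+\bar A(x)^TP=\big(P\bar A^++(\bar A^+)^TP\big)+2P\diag\!\big(\bar a_i(x_i)-\beta_i\big)\preceq -\mu I,
\]
since each $\bar a_i(x_i)-\beta_i\le0$ and $P$ is positive diagonal; the same holds over all targets after taking the relevant suprema. Thus $M=P$ is a uniformly contracting, constant, block-diagonal metric for the closed loop.

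To finish, set $W=P^{-1}\in\Pi$ and $Y=KW$. Because $W$ is diagonal, $Y$ has the same sparsity pattern and local dependence as the given $\mathscr{G}_c$-admissible gain $K$, so $Y\in\Xi$. Multiplying $P\bar A(x)+\bar A(x)^TP\preceq -\mu I$ on both sides by $W=P^{-1}$, and using $\dot W=0$ and $\bar A W+W\bar A^T=AW+WA^T+BY+(BY)^T$, turns it into \eqref{eq:MI:SSCCM} for a suitable rate $\lambda'>0$, evaluated along the closed loop; extending it to all $u\in\mathbb R^m$ uses that the columns $b_i$ are Killing vectors for the constant metric $P$ (so the $u$-affine part of \eqref{eq:MI:SSCCM} vanishes). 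Theorem~\ref{thm:main result} then produces a $\mathscr{G}_c$-admissible controller solving Problem~\ref{problem formulation} from the separable CCM $M=W^{-1}=P$.

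The step I expect to be the real obstacle is showing that the extremal constant matrix $\bar A^+$ is Hurwitz: the entrywise inequality $\bar A(x)\le \bar A^+$ is the wrong direction for stability to transfer, so one genuinely needs the reduction that each $\bar A_{ii}$ depends on $x_i$ alone — which is exactly what lets $\bar A^+$ be realized as a \emph{limit} of the $\bar A(x^{(\nu)})$ rather than as a mere coordinatewise supremum — together with the uniformity of the rate $\lambda$ in hypothesis~(1), which is what survives the limit as a strict Lyapunov certificate. Secondary technical points are the dependence of $\bar A$, and hence of $\beta_i$ and of the sparsity of $Y$, on $(x^\star,u^\star)$ (handled by taking suprema over targets and checking finiteness), and the Killing identity needed for \eqref{eq:MI:SSCCM} to hold for all $u$.
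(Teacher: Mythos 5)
Your proof is correct in its essentials but takes a genuinely different, self-contained route. The paper disposes of the existence question in one line by citing \cite[Theorem 6]{Manchester2017existence} (contracting $+$ monotone $+$ linearly coupled $\Rightarrow$ separable contraction metric), and then converts the closed-loop contraction condition \eqref{eq:monotone cl contracting} into the CCM condition \eqref{eq:CCM MK} by evaluating it at $x=x^\star$, where $u=u^\star$ sweeps all of $\mathbb{R}^m$. You instead re-derive the separable metric from scratch, and the chain of reasoning is sound and rather elegant: equality of mixed partials of the closed-loop vector field together with hypothesis (3) forces each diagonal entry of $A+BK$ to depend only on its own coordinate; the $(i,i)$ entry of \eqref{eq:monotone cl contracting} bounds these diagonal functions from above; the decoupled dependence lets the coordinatewise supremum be realized as a genuine limit of Jacobians (not merely an entrywise majorant), so the Lyapunov inequality survives in the limit and the extremal matrix is Metzler and Hurwitz; diagonal stability of Metzler Hurwitz matrices \cite{berman_nonnegative_1994} then yields a constant diagonal $P$, which dominates $A+BK$ at every state because the perturbation is a nonpositive diagonal. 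What this buys is an explicit, constant, diagonal metric and independence from the external reference to \cite{Manchester2017existence}, at the cost of length; your diagnosis that the delicate point is realizing the extremal matrix as a limit rather than a supremum is exactly right.

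Two loose ends remain. First, every object in your construction (the diagonal functions, their suprema, and the constant off-diagonal part) a priori depends on the target $(x^\star,u^\star)$; you defer this to ``taking suprema over targets,'' but the simultaneous-limit argument then requires the off-diagonal entries and all diagonal suprema to be attained along a single sequence, which is not automatic if the coupling varies with the target. The clean repair is the paper's substitution $x^\star=x$, $u^\star=u$, which collapses the hypotheses into statements about a single matrix function of $(x,u)$ and simultaneously delivers the quantifier ``for all $u\in\mathbb{R}^m$'' in \eqref{eq:MI:SSCCM} directly from the quantifier over $u^\star$. Second, and relatedly, your final step asserts that the $b_i$ are Killing vectors for the constant metric $P$; this is not implied by the hypotheses as stated and is not needed once the substitution above is used. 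Neither issue touches the core of the argument, and for what it is worth the paper's own proof of this step is itself left incomplete (its penultimate sentence trails off), so your version is, if anything, the more careful of the two on the existence part while being looser on the quantifier bookkeeping. The conditions of Theorem~\ref{thm:main result} are then met exactly as you say, with $W=P^{-1}\in\Pi$ and $Y=KW$ inheriting the sparsity of $K$ because $W$ is diagonal and constant.
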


\begin{proof}
Since the closed-loop system is contracting, monotone, and has linear coupling, by \cite[Theorem 6]{Manchester2017existence} it has a separable contraction metric.

Now, by assumption \eqref{eq:monotone cl contracting} holds for all $x, x^\star, u^\star$ for the closed-loop system, i.e. with
$A = A(x,k(x,x^\star, u^\star))$. In particular, it holds when $x=x^\star$, for which $u=k(x,x^\star,u^\star)=u^\star$. So for any $x^\star, u^\star$, 

This implies that \eqref{eq:CCM MK} holds for all $x,u$, hence $M$ is a separable control contraction metric for \eqref{eq:general system}.
\end{proof}
}

\section{Scalable Design of Distributed Controllers}\label{sec:distributed_design}

\mypara While the above developments give convex conditions for the design of distributed controllers, for large scale systems they may still be impractical. The problem is that one must find $W$ and $Y$ that satisfy \eqref{eq:MI:SSCCM}, which is a matrix inequality of the same dimension of the total number of states in the full network. Despite its sparsity, this can still be very challenging to solve.

In this section we show that when the combined communication/physical interconnection graph is \textit{chordal}, the problem of solving \eqref{eq:MI:SSCCM} is dramatically simplified. Many engineering systems naturally have chordal graph structures, and this has motivated research in efficient methods for semidefinite and sum-of-squares programming \cite{VandenbergheAndersen2015, PakazadHanssonAndersenEtAl2015, Waki2006}.

%TODO: check symmetric/chordal.

\tb{
\begin{theorem}\label{prop:clique tree decomposition}
	Let $\mathscr{G}:=\mathscr{G}_p\cup\mathscr{G}_c$ and suppose $\mathscr{G}^u$ is chordal. Let $l\in\mathbb{N}$ be the number of nodes of the clique tree $\mathscr{T}(\mathscr{G}^u)$. Then, the pointwise LMI~\eqref{eq:MI:SSCCM} can be decomposed into $l$ pointwise  LMIs of smaller dimension, each corresponding to a clique. Furthermore, each pointwise LMI depends only on the  $x_i, \breve{x}_i$ and $\vec{x}_i$ for each node $i$ contained in the corresponding clique.
\end{theorem}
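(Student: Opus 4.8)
The plan is to show that the matrix on the left of \eqref{eq:MI:SSCCM},
\[
\Phi(x,u):=-\dot{W}+AW+WA^T+BY+(BY)^T+2\lambda W,
\]
inherits a block-sparsity pattern equal to the adjacency structure of $\mathscr{G}^u$, and then to invoke the chordal decomposition of sparse matrix inequalities. For the first part I would partition $\Phi$ into $N\times N$ blocks, the $(i,j)$ block of size $n_i\times n_j$. Since $W\in\Pi$, the matrices $W$, $\dot{W}$ and $2\lambda W$ are block diagonal and populate only diagonal blocks. Because $f_i$ and $b_i$ depend only on $x_i,\breve{x}_i$, the $(i,j)$ block of $A(x,u)$ vanishes unless $(j,i)\in\mathscr{E}_p$; left/right multiplication by the block-diagonal $W$ preserves this support, so $AW+WA^T$ is block-supported on $\mathscr{G}_p^u$. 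Because $B=\diag(b_1,\dots,b_N)$ and $Y\in\Xi$, the $(i,j)$ block of $BY$ equals $b_iY_{ij}$, nonzero only if $(i,j)\in\mathscr{E}_c$, so $BY+(BY)^T$ is block-supported on $\mathscr{G}_c^u$. Adding these, $\Phi$ is block-supported on $\mathscr{G}_p^u\cup\mathscr{G}_c^u=\mathscr{G}^u$, with every diagonal block present thanks to the self-loops. Along the way I would record that the $(i,j)$ block of $\Phi$ is a function of $x_i,\breve{x}_i,\vec{x}_i$ and $x_j,\breve{x}_j,\vec{x}_j$ only; the dependence on $u$ is inessential, because \eqref{eq:MI:SSCCM} is affine in each $u_i$ and required for all $u$, which forces the corresponding coefficient matrices to vanish.

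For the second part, let $\mathscr{C}_1,\dots,\mathscr{C}_l$ be the cliques indexing $\mathscr{T}(\mathscr{G}^u)$ and $E_k$ the block-selection matrix picking out the rows and columns indexed by the nodes in $\mathscr{C}_k$. Since $\mathscr{G}^u$ is chordal, the clique decomposition of sparse semidefinite matrices (the Grone--Johnson--S\'a--Wolkowicz / Agler--Helton--McCullough--Rodman theorem, in the form recalled in \cite{VandenbergheAndersen2015,PakazadHanssonAndersenEtAl2015}) gives that a symmetric $S$ with block support $\mathscr{G}^u$ satisfies $S\prec 0$ if and only if $S=\sum_{k=1}^l E_k^TS_kE_k$ for some symmetric $S_k\prec 0$ of dimension $\sum_{i\in\mathscr{C}_k}n_i$. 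Applied pointwise to $\Phi(x,u)$, this turns the single pointwise LMI \eqref{eq:MI:SSCCM} into the conjunction of the $l$ pointwise LMIs $\Phi_k(x,u)\prec 0$, $k=1,\dots,l$, together with linear block-by-block overlap constraints $\Phi=\sum_kE_k^T\Phi_kE_k$; here the $\Phi_k$ are auxiliary matrix-valued decision variables that can moreover be chosen to depend smoothly on $\Phi(x,u)$. The reverse implication (feasibility of the subproblems implies feasibility of the original) is immediate on summing.

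Finally I would argue locality. Every nonzero block of $\Phi$ is indexed by an edge of $\mathscr{G}^u$, and in a chordal graph every edge and every vertex lies in a clique, so the $k$-th subproblem is assembled only from blocks $\Phi_{ij}$ with $i,j\in\mathscr{C}_k$; carrying out the allocation of the shared (separator) blocks along the clique tree using the running-intersection property, one verifies that $\Phi_k$ is built solely from the data $W_i,f_i,b_i,Y_{ij},A_{ij}$ with $i,j\in\mathscr{C}_k$, hence is a function of $x_i,\breve{x}_i,\vec{x}_i$ for $i\in\mathscr{C}_k$ only, which is the claimed dependence. I expect this last step to be the main obstacle: one must choose the split of each shared block among the cliques containing it so that every $\Phi_k$ stays negative definite and, at the same time, acquires no dependence on states outside $\{x_i,\breve{x}_i,\vec{x}_i:i\in\mathscr{C}_k\}$, that is, so that no long-range coupling leaks along the separators of a long clique chain. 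This is precisely where the structural hypotheses ($W\in\Pi$, $Y\in\Xi$, $A$ respecting $\mathscr{G}_p$, self-loops at every node) together with the running-intersection property of $\mathscr{T}(\mathscr{G}^u)$ must be brought to bear.
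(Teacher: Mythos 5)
Your proposal follows essentially the same route as the paper: identify the block-sparsity pattern of the left-hand side of \eqref{eq:MI:SSCCM} with the adjacency structure of $\mathscr{G}^u$, then decompose it as $T=\sum_{k=1}^{l}E_k^TF_kE_k$ over the cliques of the clique tree. You are more explicit than the paper on two points: the derivation that $W\in\Pi$, the $\mathscr{E}_p$-support of $A$, and the $\mathscr{E}_c$-support of $BY$ together place the support of $T$ inside $\mathscr{G}^u$ (the paper takes this for granted), and the appeal to the chordal decomposition theorem for the converse direction, whereas the paper's proof only establishes that $F_k\prec 0$ for all $k$ implies $T\prec 0$ and asserts the existence of the $F_k$. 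The step you single out as the main obstacle --- allocating the shared separator blocks so that each $F_k$ stays negative definite while depending only on $x_i,\breve{x}_i,\vec{x}_i$ for $i\in\mathscr{C}_k$ --- is exactly the step the paper also leaves unproved; it simply states that ``the corresponding matrix $F_k$ has arguments $x_i,\breve{x}_i,\vec{x}_i$.'' Your observation that each nonzero block $T_{ij}$ itself depends only on $x_i,\breve{x}_i,\vec{x}_i,x_j,\breve{x}_j,\vec{x}_j$, with both $i$ and $j$ lying in the clique covering the edge $(i,j)$, is in fact the ingredient needed to make that assertion precise (any splitting of the blocks among the cliques containing them inherits this locality), so you have not missed anything the paper supplies.
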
}

\begin{proof}
	Using the Algorithm~3.1 from \cite{VandenbergheAndersen2015}, it is possible to decompose the graph $\mathscr{G}$ into the clique tree $\mathscr{T}(\mathscr{G})$. Let the integer $l>0$ be the number of cliques in $\mathscr{T}(\mathscr{G})$. Our proof follows similar arguments to Section~II of \cite{PakazadHanssonAndersenEtAl2015}.
	
\mypara 	Let the sets $\mathscr{C}_1,\ldots,\mathscr{C}_l$ be the nodes of $\mathscr{T}(\mathscr{G})$, and $\mathbin{\mathtt{card}}_k$ be cardinality (number of elements) of the set $\mathscr{C}_k$, $k\in\mathbb{N}_{[1,l]}$. For each index $k\in\mathbb{N}_{[1,l]}$, define the matrix $E_k\in\mathbb{R}^{\mathbin{\mathtt{card}}_k\times n}$ obtained from the $n\times n$ identity matrix with blocks of rows indexed by $\mathbb{N}_{[1,N]}\setminus \mathscr{C}_k$ removed.
	
\mypara    Denote the left-hand side of the LMI~\eqref{eq:MI:SSCCM} by $T$. The existence of $l$ cliques implies that there exist matrices $F_k:\mathbb{R}^{\mathbin{\mathtt{card}}_k}\to\mathbb{R}^{\mathbin{\mathtt{card}}_k\times \mathbin{\mathtt{card}}_k}$, where $k\in\mathbb{N}_{[1,l]}$, satisfying   \begin{equation}\label{eqn:T_decomp}
   	 T=\sum_{k=1}^l E_k^T F_k E_k\;.
  	\end{equation}
%\mypara Without loss of generality, assume that the node $i\in\mathscr{V}$ is contained in the clique $\mathscr{C}_k$. 
Then if $
  		F_k\prec0,\quad\forall k\in\mathbb{N}_{[1,l]},
	$ the matrix $T$ is negative definite. Thus, the LMI~\eqref{eq:MI:SSCCM} holds. 
	
	For each node $i\in\mathscr{V}$ contained in the clique $\mathscr{C}_k$. The corresponding matrix $F_k$ has arguments $x_i, \breve{x}_i$ and $\vec{x}_i$. In other words, $F_k$ depends on how strongly the nodes of the  system (defined by $\mathscr{G}_c$) and communication network (defined by $\mathscr{G}_p$) are connected among each other.
\end{proof}

\mypara \tbb{If a graph is not chordal, it is possible to make it chordal by adding ``fake'' edges to form new cliques in the graph. This is referred to as a \textit{chordal embedding}, \textit{chordal extension}, or a \textit{triangulation}. Algorithms for finding such triangulations are well-developed and widely-used for solving large sparse linear equations and semidefinite programs \cite{heggernes2006minimal, VandenbergheAndersen2015}. 

We note here that these ``fake'' edges are only used to define the $l$ cliques used in the decomposition \eqref{eqn:T_decomp}, in order to speed up the computational verification of \eqref{eq:MI:SSCCM}. The fake edges do not appear in the communication graph and do not have any impact on the resulting structure of the metric $M$ or differential controller $K$, and hence do not effect the theoretical results on stabilization or distributed communication structure.}

\section{Illustrative Examples}\label{sec:Illustration}

\tb{

\subsection{Distributed Control of a Vehicle Platoon}
We first illustrate the proposed method through the design of a distributed nonlinear platoon controller.
Platooning provides a means for improving road safety, throughput and vehicle efficiency.
The control objective is for groups of vehicles to cooperatively maintain a group reference velocity with small intervehicle spacing. 

Each vehicle is assumed to be equipped with a radar measuring intervehicle distance and a wireless communication device to communicate with surrounding vehicles. Limitations in range and delay in the communication device mean that all-to-all communication within a platoon is generally impossible, i.e. the platoon must operate with communication limited to nearby vehicles. Several authors have proposed distributed controllers achieving stability and string stability subject to communication constraints e.g. \cite{ploeg2014controller, dunbar2012distributed, zheng2017distributed} and references therein. In \cite{monteil2017design}, the use of a nonlinear protocol leads to significant improvements in string stability.

%The design of linear controllers to achieve certain performance criteria is a mature and well established field. Unfortunately, similar results for the design of nonlinear controllers are scarce. Furthermore, when applied to non-linear systems, these linear controllers are only guaranteed to operate effectively in a local region and do not necessarily guarantee stability. In this section, we show how well established linear control design methods can be extended to nonlinear systems with guaranteed stability using CCMs.
\begin{figure}
	\centering
	\includegraphics[ trim = {0.5cm 0.5cm 1cm 0cm}, clip, width = 0.49\linewidth]{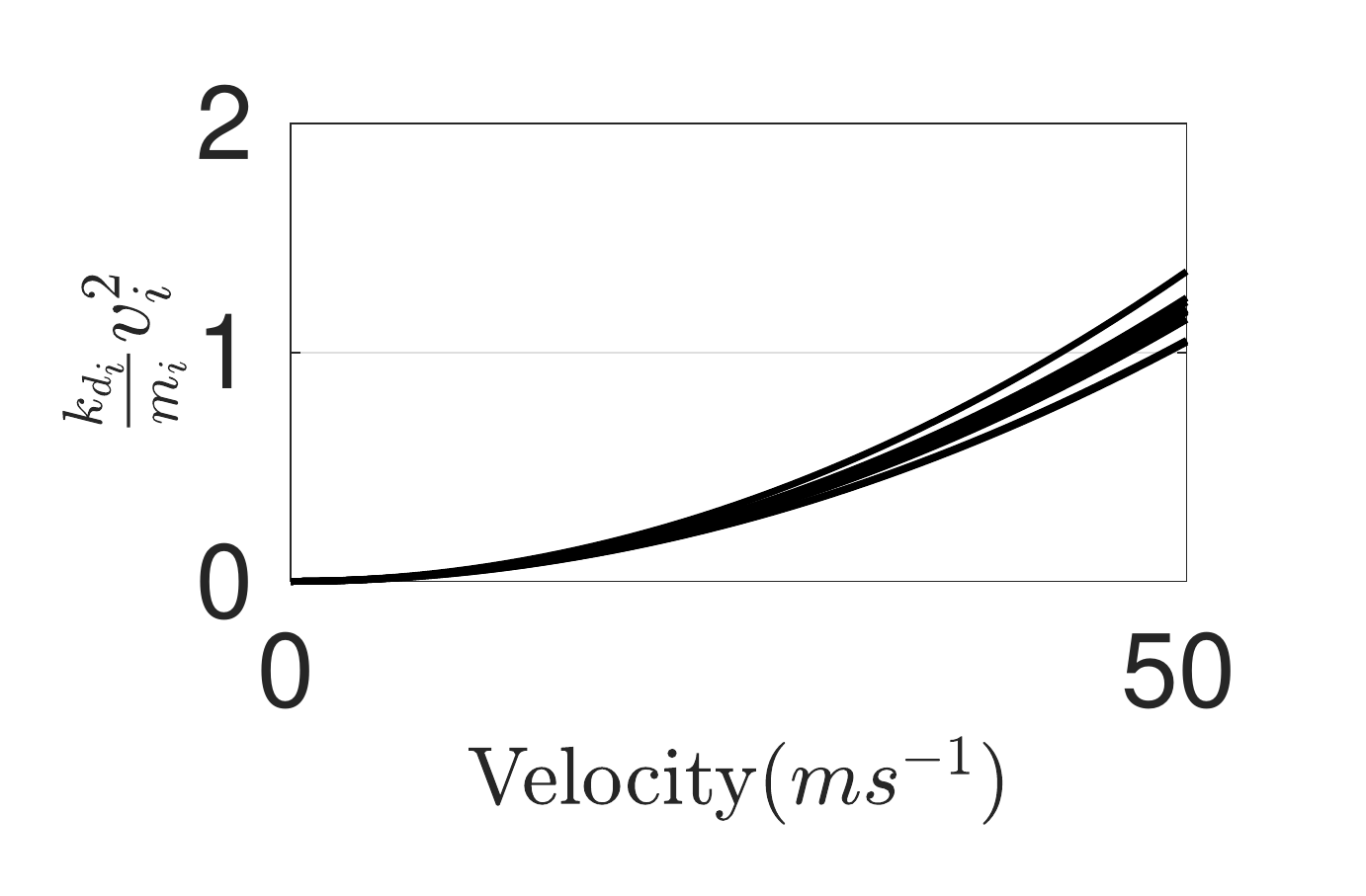}
	\includegraphics[ trim = {0.5cm 0.5cm 1cm 0cm}, clip, width = 0.49\linewidth]{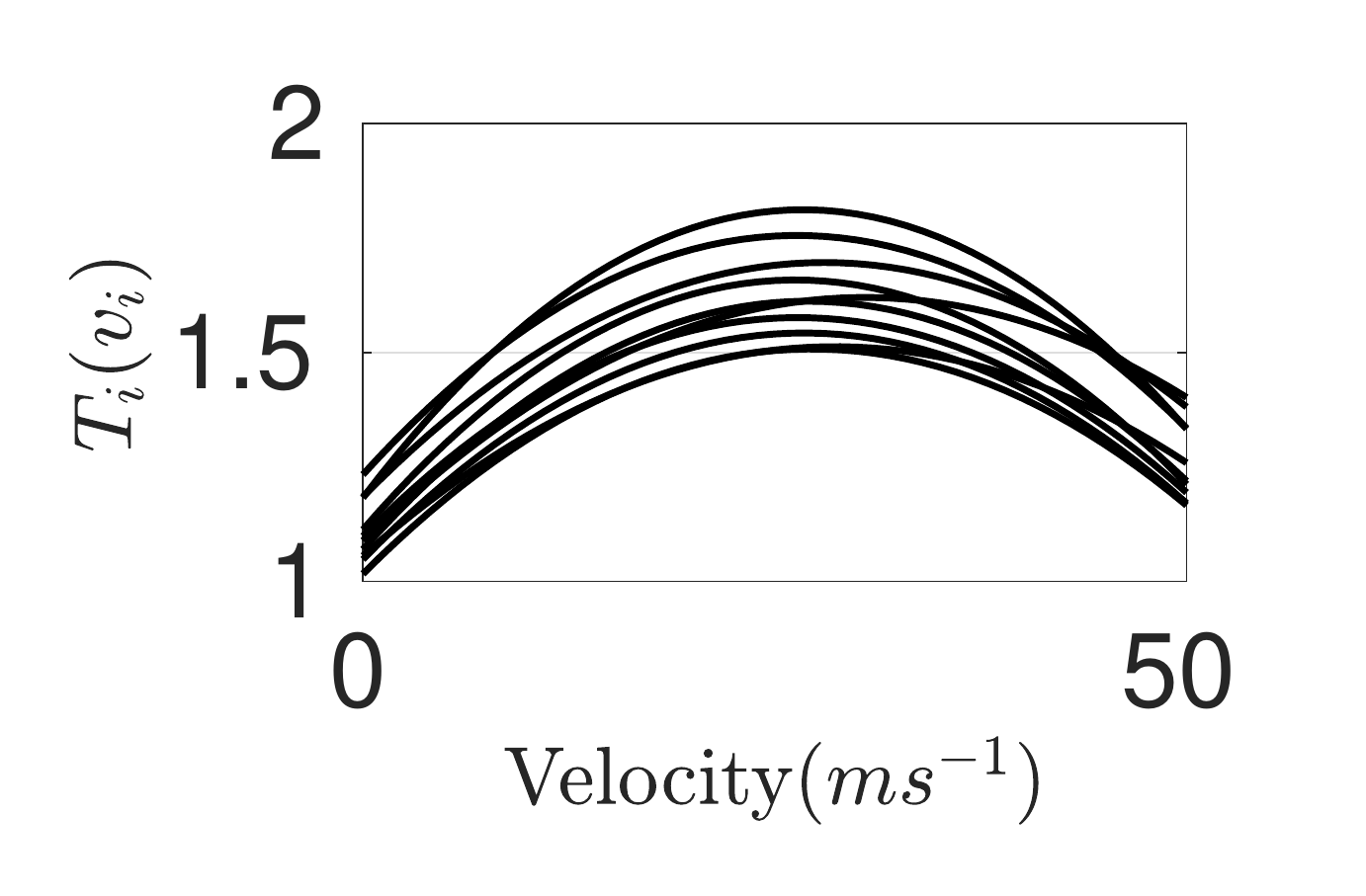}
	\caption{\label{fig:nonlinearity} Variation of drag force and drivetrain gain over velocity for each vehicle. }
\end{figure}

\begin{table}
	\centering
	\begin{tabular}{|c|c|c|c|c|c|}
	\hline	$m$ (tonne)& $k_{d}$ (kg/m)& $\alpha_i$ & $\beta_i$ & $\bar{\omega}$ (rad/s) & $\bar{T}$ (Nm) \\
	 \hline  1.8-2 & 1.3-1.6& 13-16 & 0.28-0.35 & 420  & 190 \\ \hline
	\end{tabular}
	\caption{\label{table: vehicle Parameters} Range of parameters used for vehicles in platoon.}
\end{table}

%The case $h=0$ is fully decentralized with each follower only having access to relative position and absolute velocity and the leader having access to absolute position and velocity.

Adapting the model used in \cite[Sec 3.1]{astrom2010feedback}, we design decentralized controllers for platoons of heterogenous vehicles with dynamics 
\begin{equation} \label{eq:vehicle_dynamics}
	\dot{s}_i=v_i, \quad\quad
	\dot{v}_i=  \frac{1}{m_i} T_i(v_i)u_i - \frac{k_{d_i}}{2m_i} v_i^2 + \omega_i .
\end{equation}
where $s_i$, $v_i$, $u_i$ and $\omega_i$ are the $i$th vehicles position, velocity, control input and a disturbance. The term $T_i(v_i)$ represents the dynamics of the drive chain 
\begin{equation*} \label{eq:drivechain_dynamics}
T_i(v_i) = \alpha_i T_{m_i} \left( 1 - \beta_i \left(\frac{\alpha_i v_i}{\omega_{m_i}} - 1\right)^2\right).
\end{equation*}
The parameters used are randomly selected from the range shown in table \ref{table: vehicle Parameters}. Choosing a state vector of $x = (s_1, v_1, s_1-s_2, v_2,..., s_{N-1} - s_N, v_N)^T$ allows for the problem of platooning at a constant velocity with constant spacing to be formulated as tracking a trajectory $x(t) = (v^*t, v^*, d^*, v^*,..., d^*, v^*)$ where $v^*$ is the desired nominal platoon velocity and $d^*$ is the intervehicular spacing. The dynamics of the platoon are written concisely in the form \eqref{eq:stacked system}.

We consider a balanced communication graph with a horizon $h$. That is, each agent $i$ has access to the state of agents $ j \in \mathbb{N}_{[i-h, i+h]}$ with $ 1 \leq j \leq N$. 

One advantage of the convexity of CCM synthesis is the ease of adding additional constraints. In this paper, we constrain the nonlinear controller to match a prescribed linear $H^\infty$ controller near a particular operating point.  

\subsection*{Distributed Linear $H^\infty$ Control Design}
Choosing a nominal operating point of $v^* = 25ms^{-1}$, $u_i^* = \frac{ k_{d_i} {v^*}^2 }{2 T(v^*)}$, 
we define the linearized system 
\begin{gather}
\dot{x} = \tilde A x + \tilde Bu + H w\\
y = \tilde Cx + \tilde D u \nonumber
\end{gather}
where $\tilde A = \frac{\partial f}{\partial x}|_{v^*, u^*}$, $\tilde B = \frac{\partial f}{\partial u}|_{v^*, u^*}$, $B_{w} = (0,1,0,...0)^T$ and $\tilde C, \tilde D$ specify the performance output which are chosen to be:
\begin{align*}
&y_{v_1} = q_{v_1}(v_1 ), & y_{s_1} = q_{s_1}(s_1),& &y_{u_1} = q_{u_1}(u_1) \\
& y_{s_i} = q_s(s_{i-1} - s_i), &y_{u_i} = q_u(u_i) & &
\end{align*}
%\begin{gather*}
%y_{v_i} = q_v(v_1 - v_i) \\
%y_{s_i} = q_s(s_{i-1} - s_i) \\
%y_{u_i} = q_u(u_i)
%\end{gather*}
for $i = 2,...,N$ where $q_s$, $q_v$ and $q_u$ are weights used to tune the controller. The values used for the examples in this paper were $(q_{v_1},q_{s_1}, q_{u_1}, q_s, q_u) = (10^{-2},1,3\times10^5,10^3,5\times10^4)$.

We assume the existence of a block diagonal storage function $V(x) = x^T P x$ rendering the structured controller design problem convex. While the restriction to a block diagonal $P$ is generally conservative, we find the same resulting gain bound for the cases when $P$ is full and $P$ is block diagonal.

We solve a state-feedback $H^\infty$ control problem by searching for a storage function $P = Q^{-1}$ and feedback gain $K = ZQ \in \Xi$ that minimizes a performance bound $\sup_w \frac{||y||_{\mathcal{L}_2}}{ ||w||_{\mathcal{L}_2} } \leq \alpha$ via the following semidefinite program \cite{tanaka2011bounded}:
\begin{equation} \label{eq:Linear Control Design}
\begin{aligned}
		& \underset{Q, Z, \alpha}{\text{minimize}} & & \alpha \\
		& \text{subject to} &  & Q > 0 , \quad Z \in \Xi,\\
\end{aligned}
\end{equation}
$$
\begin{bmatrix}
\tilde AQ + \tilde BZ + (\tilde AQ+ \tilde BZ)^T & H & (\tilde CQ + \tilde D Z)^T\\
H^T & -\alpha I & 0\\
\tilde CQ + \tilde D Z & 0& -\alpha I
\end{bmatrix} < 0
$$
In general, there are many controllers that can satisfy the same gain bound in problem \eqref{eq:Linear Control Design}. As such, we can improve performance by first solving \eqref{eq:Linear Control Design} and then fixing $\alpha$ and maximize the smallest eigenvalue of $Q$.

\subsection*{Distributed CCM}

The set of matrices $W,Y$ satisfying LMIs \eqref{eq:MI:SSCCM} define a set of universally stabilizing control laws of the form \eqref{eq:contracting feedback law}. Note however, that the model exhibits non-physical behaviour for negative or large $v_i$, when the term $T_i(v_i)$ is zero or negative, hence LMI \eqref{eq:MI:SSCCM} cannot be satisfied over all $x$. It can, however,  be enforced over a convex set $v_i \in [0, 50 ms^{-1}]$ using Lagrange multipliers, c.f. Remark \ref{rem:convex} above.
We utilize a dummy variable $\nu$ to help solve the following feasibility problem 
\begin{gather*}
W \succ 0, \quad Y(x_{nom}) = KW(x_{nom}) \\
-\nu^T \left( AW + WA^T + BY + (BY)^T + 2 \lambda W\right) \nu \\ -\sum_i \tau_iv_i(v_i-50) \succ 0,
\end{gather*}
where $Y\in \Xi$ consists of degree 2 polynomials, $W$ is a block diagonal, flat metric and $\tau_i$ is a lagrangian multiplier consisting of degree 2 polynomials in $x$ and $\nu$.
Solving this problem with $N = 10$ and $\lambda = 0.02$ using Yalmip \cite{Loefberg2004,Loefberg2009} and Mosek on an intel i7 processor with 8GB of ram took 9 seconds for $h=0$ and 40 seconds for $h = 1$.

We compare the resulting controllers for two communication patterns in three different situations. The first situation looks at tracking a step change in reference velocity from $10ms^{-1}$ to $5ms^{-1}$ that occurs at time $t = 5$ seconds. We then study the platoon response to a temporary disturbance at time $t=10$  and a worst-case step disturbance at time $t = 20$ as described by \eqref{eq:platoon disturbance}. The platoon velocity response can be seen in figure \ref{fig:velocity disturbance response} and the platoon's position response can be seen in figure \ref{fig:position disturbance response}.
\begin{equation} \label{eq:platoon disturbance}
w_1(t) = \begin{cases}
20\sin(\frac{2\pi}{10}(t - 95)), & 95 \leq t \leq 100\\
10 & t\geq 180 \\
0 & \text{otherwise}
\end{cases}
\end{equation}

%\begin{equation} \label{eq:platoon step disturbance}
%w_1(t) = \begin{cases}
%0 , &  t \leq 2.5\\
%30, & \text{otherwise}
%\end{cases}.
%\end{equation}
%We first compare the response of the nonlinear controllers with different communication structures given perturbed initial conditions. The results, presented in figure \ref{fig:InitialConditions}, show that the communication structure has a large effect on the rate of at which the platoon synchronizes. .
Figures \ref{fig:velocity disturbance response} and \ref{fig:position disturbance response} show the well known, desirable effects that increasing communication has on the rate of synchronization and propogation of disturbances down the vehicle chain. Figure \ref{fig:position disturbance response} also shows an overall reduction in the magnitude of the disturbance response. Note that the nonlinear system is operating far from the linearization point of 25$ms^{-1}$. The use of separable control contraction metrics, allows for controllers with different communication patterns to be easily developed with guaranteed stability across an operating range.

\begin{figure}
	\begin{minipage}{1\linewidth}
		\centering
		\includegraphics[trim = {0.5cm 0cm 1cm 0.5cm}, clip, width=\linewidth]{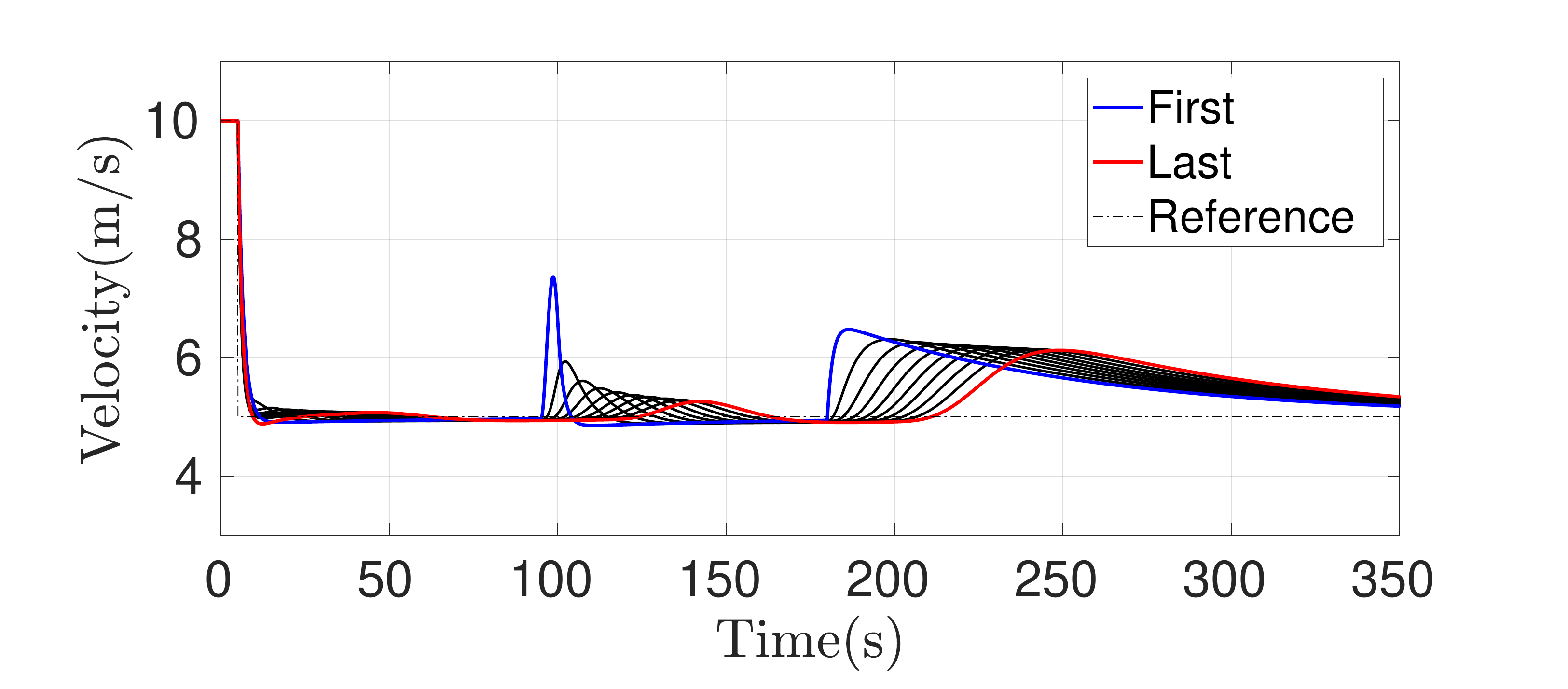}
		\subcaption{$h = 0$}
	\end{minipage}
	\begin{minipage}{1\linewidth}
		\centering
		\includegraphics[trim = {0.5cm 0cm 1cm 0.5cm}, clip, width=\linewidth]{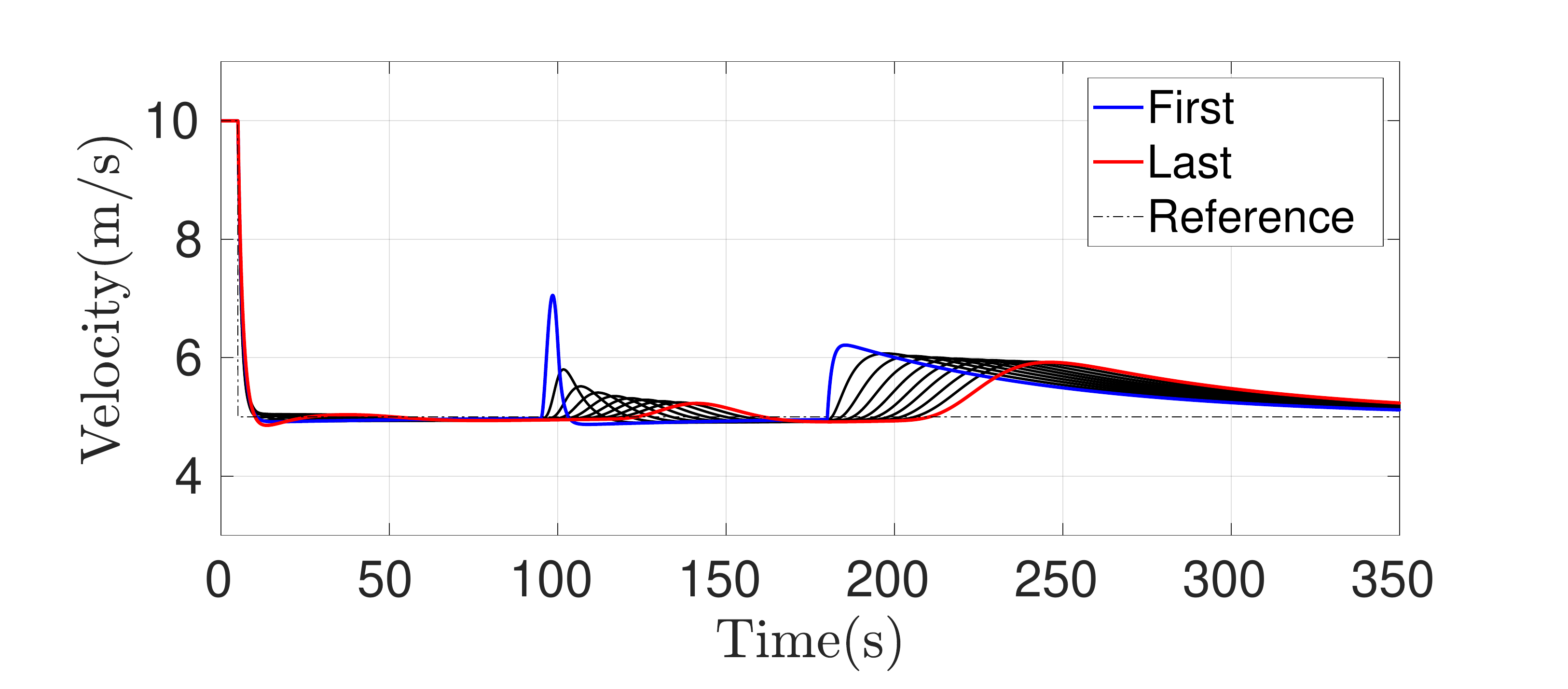}
		\subcaption{$h = 1$}
	\end{minipage}
	\caption{\label{fig:velocity disturbance response}Velocity response of 10 car platoon to step reference change and disturbance \eqref{eq:platoon disturbance}. The first vehicle is in blue and the last is in red.}
\end{figure}
\begin{figure}[htpb!]
	\centering
	\begin{minipage}{1\linewidth}
		\includegraphics[trim = {0.5cm 0cm 1cm 0.5cm}, clip, width=\linewidth]{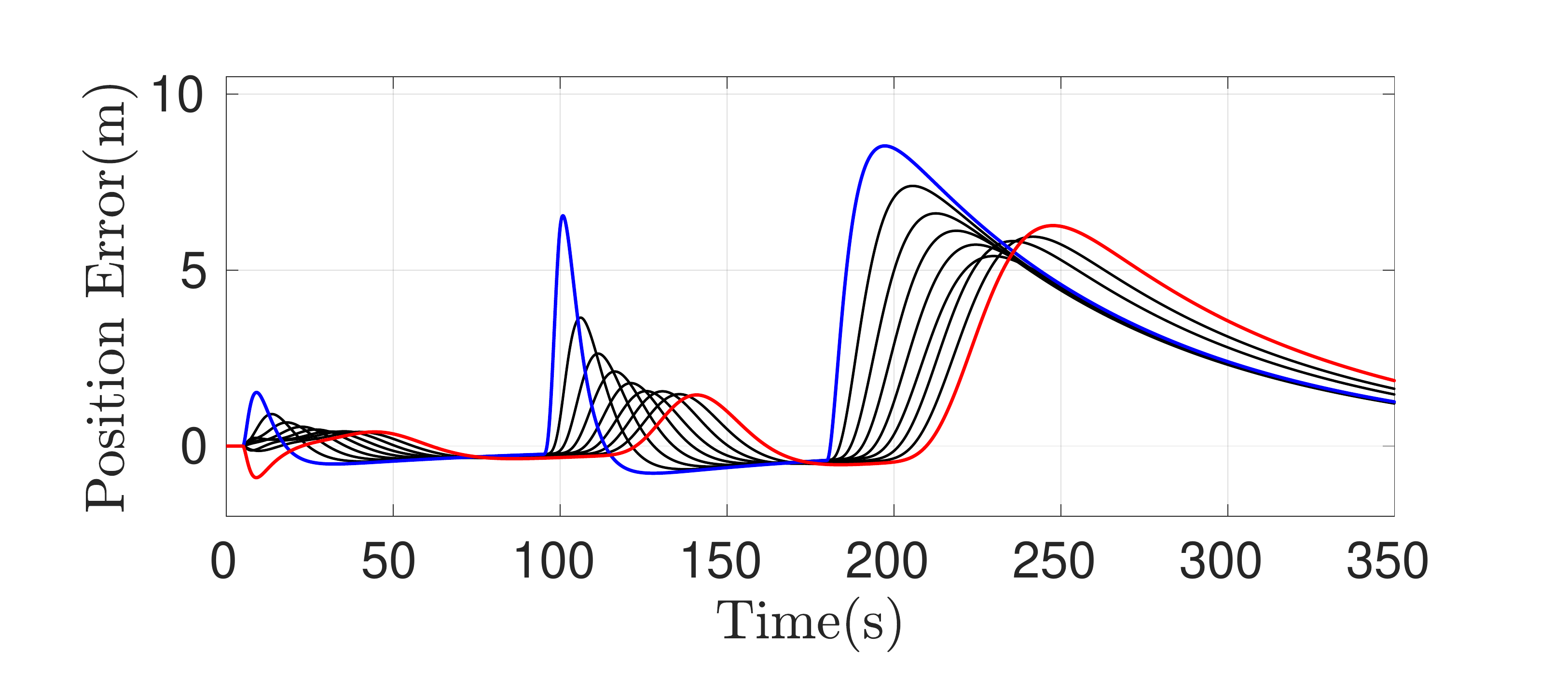}
		\subcaption{$h = 0$}
	\end{minipage}
	\begin{minipage}{1\linewidth}
		\includegraphics[trim = {0.5cm 0cm 1cm 0.5cm}, clip, width=\linewidth]{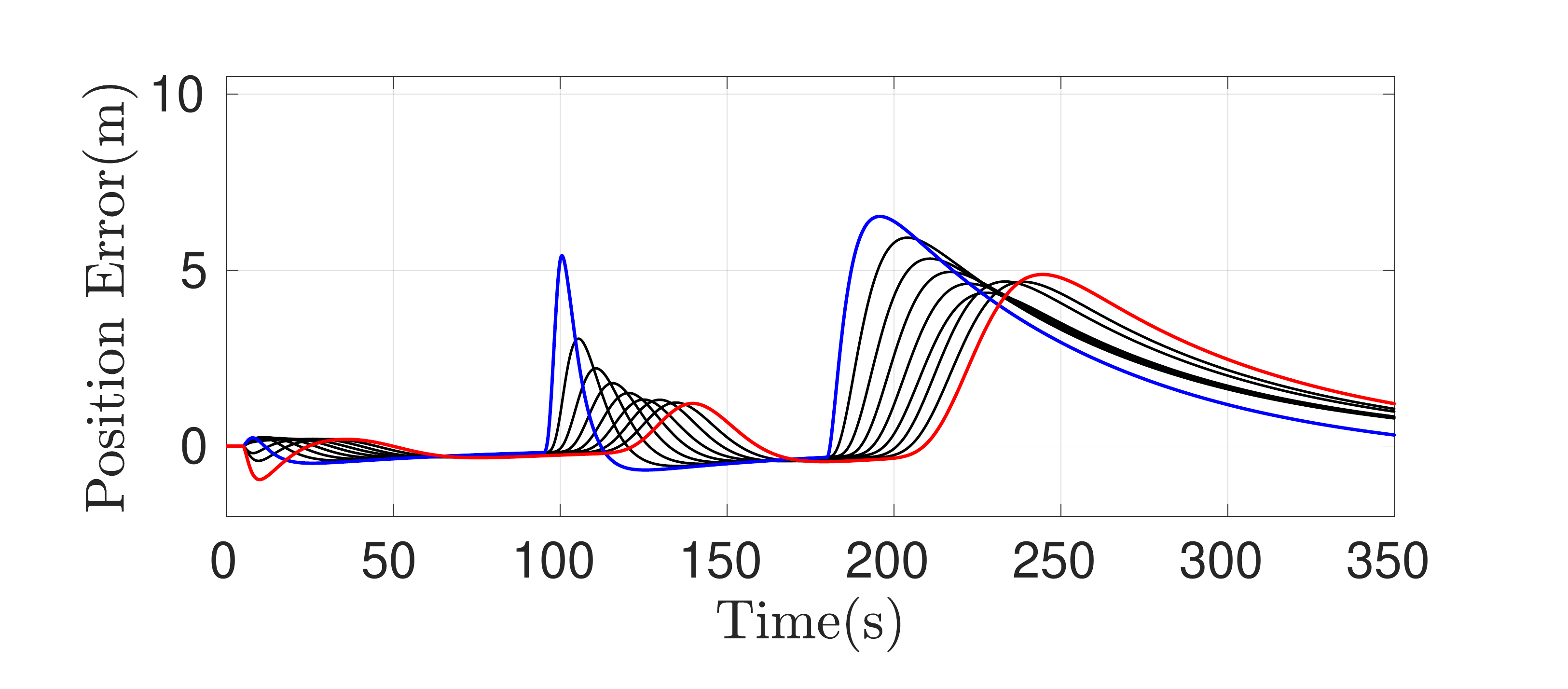}
		\subcaption{$h = 1$}
	\end{minipage}
	
	\caption{\label{fig:position disturbance response}Response of 10 car platoon to step reference change and disturbance \eqref{eq:platoon disturbance}. The first vehicle is in blue and the last is in red. The position error is taken as $s_{i-1} - s_i - d^*_i$ for  $i = 2,...,N$}
\end{figure}

}

\subsection{Scalability and Flexibility: Large-Scale System with Uncontrollable Linearization}\label{sec:scalability example}

In this subsection, we consider a more academic example to illustrate the flexibility and scalability of the CCM approach. Consider a system of $N$ agents with local dynamics
\begin{equation}\label{eq:example}
	\scalebox{0.95}{$\begin{array}{rcl}
		\dot{x}_i&=&-x_i-x_i^3+y_i^2 + 0.01\left(x_{i-1}^3 - 2x_i^3 +x_{i+1}^3\right)\\
		\dot{y}_i&=&u_i\;,
	\end{array}$}
\end{equation}
for $i\in\mathbb{N}_{[1,N]}$ and for convenience define the boundary states $x_0=x_1$ and $x_N=x_{N+1}$. For each index $i\in\mathbb{N}_{[1,N]}$, define the vectors $q_i=(x_i,y_i)$, $\breve{q}_i=(x_{i-1},x_{i+1})$ and let $q=(q_1,\ldots,q_N)$, and
\begin{align*}
	f_i(q_i,\breve{q}_i)=&\begin{bmatrix}
		-x_i-x_i^3+y_i^2 + 0.01\left(x_{i-1}^3 - 2x_i^3 +x_{i+1}^3\right)\\
		0
	\end{bmatrix}\\
	B_i=&\begin{bmatrix}
	0, 1
	\end{bmatrix}^T.
\end{align*}
Note that system \eqref{eq:example} is not controllable when linearized about the origin, since the $x$ and $y$ dynamics are decoupled, and furthermore is not feedback linearizable in the sense of \cite{Isidori:1995}, because the vector fields
\begin{align*}
	B=&\mathbin{\diag}(B_1,\ldots,B_N),\\
	\dfrac{\partial f}{\partial q}B-\dfrac{\partial B}{\partial q}f=&\mathbin{\diag}\left(\begin{bmatrix}
	2y_1\\0
	\end{bmatrix},\ldots,\begin{bmatrix}
	2y_N\\0
	\end{bmatrix}\right)
\end{align*}
are not linearly independent at the origin.  Furthermore, due to the quadratic term on $y$, the only possible action by the controller on the $x$-subsystem is to move the $x$-component of solution to \eqref{eq:example} towards the positive semi-axis. In other words, the controller cannot reduce the value of the $x$-component.

%In the remainder of this section, two network topologies are considered to illustrate the results presented in this paper. 

%\subsection{Chain topology}\label{sec:chain topology}

%\mypara The graph describing the network structure is provided in Figure~\ref{fig:graph}. 

%\begin{figure}[htpb!]
%	\centering
%	\input{./imgs/GraphStructure.pdf_tex}
%	\caption{Graph describing the network composed of systems of the form \eqref{eq:example}.}
%	\label{fig:graph}
%\end{figure}

%\mypara When $N>8$ and $Y$ is a dense matrix with elements described by polynomials, 
%The reason is that the matrix inequality~\eqref{eq:MI:SSCCM} requires verification dense $N\times N$ polynomial matrix, each element of which has polynomials   A reason for this issue is the dependence of each element of the matrix $Y$ on the variable $q$. Another reason is the need to verify if the matrix inequality~\eqref{eq:MI:SSCCM} is negative definite. For instance, when  polynomials with degree $d$ are employed, the number of monomials to be optimized in the $(i,j)$ position of the matrix $T$ is
%\begin{equation*}
%		\begin{pmatrix}
%n+d\\ d
%\end{pmatrix}.
%\end{equation*}
%Consequently, when no structural constraints are imposed on $Y$, this verification can lead to
%\begin{equation*}
%n\dfrac{n+1}{2}\begin{pmatrix}
%n+d\\ d
%\end{pmatrix}
%\end{equation*}
%monomials to be computed. This motivates the approach proposed in this paper.

\mypara To show the advantages of the method proposed in this paper, a benchmark composed of  three scenarios, according to the constraints imposed on the matrix $Y$, has been created.  Namely, the unconstrained case, in which $\mathscr{G}_c$ is a complete graph, the ``neighbor'' case, in which $\mathscr{G}_c=\mathscr{G}_p$, and the fully decentralized case, in which $\mathscr{G}_c$ has no edges $(i,j)$ with $i\ne j$. 

In each case we searched for a constant dual metric $W$ and a matrix function $Y$ with second-order polynomial terms in the variables as described by $\Xi$. The numerical results were obtained using Yalmip \cite{Loefberg2004,Loefberg2009} and Mosek running on an Intel Core i7 with 32GB RAM.

\mypara For the unconstrained case, the graph $\mathscr{G}_c$ describing the communication network is fully connected and the matrix $Y$ was full, with each element able to depend on all state variables. %Consequently, for each $i\in\mathbb{N}_{[1,4]}$, for every $j\in\mathbb{N}_{[1,4]}$, $Y_{ij}(q_1,\ldots,q_4)\in\mathbb{R}^{1\times 2}$. 
For this case, the set of matrix inequalities~\eqref{eq:MI:SSCCM} could not be solved due to memory constraints when $N>8$, i.e. state dimension $n>16$.

For the two latter cases, it was possible to solve \eqref{eq:MI:SSCCM} for up to $N=512$ systems, i.e. a full state dimension of $n=1024$, using the chordal decomposition of Section \ref{sec:distributed_design}. Since the string topology is chordal, and the LMI~\eqref{eq:MI:SSCCM} can be decomposed into $N-1$ cliques each with two nodes.

\mypara Figure \ref{fig:3dsim} shows simulations of the network~\eqref{eq:example} with $N=4$. All controller structures achieve exponential convergence, whereas the open loop simulation (performed with $u\equiv0$) does not converge to the origin.
%Figure~\ref{fig:Riemannian energy} show the Riemannian metric $V$ computed along solutions to network~\eqref{eq:example}. %Note that, when random initial conditions are considered and the input is identically zero, the system may diverge from the neighborhood of the origin.
\begin{figure}
	\centering
	\includegraphics[width=\linewidth]{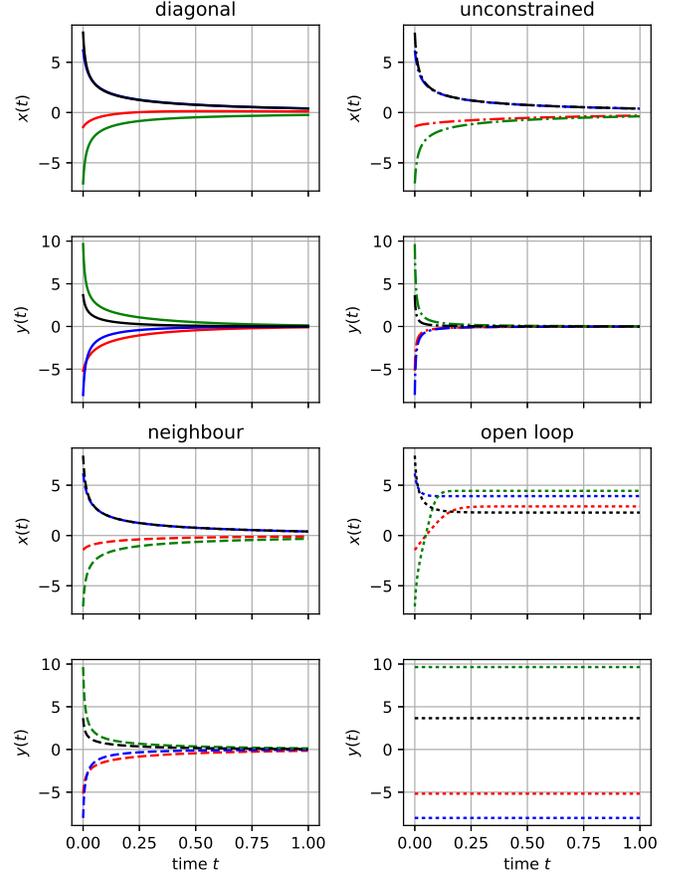}

	\caption{Simulation of network~\eqref{eq:example} with $N=4$ and the target trajectory being the origin and under the controller obtained according to different constrains for the computation of $Y$: diagonal, neighbour and unconstrained.}
	\label{fig:3dsim}
\end{figure}
%
%\begin{figure}[htbp!]
%	\centering
%%	\includegraphics[]{}
%	\includegraphics[width=\linewidth]{./imgs/Riemannian_energy.eps}
%	\caption{Computation of the Riemannian energy along solutions to \eqref{eq:example}, according to each of each case. The bottom graph is the logarithm of the Riemannian energy.}
%	\label{fig:Riemannian energy}
%\end{figure}

\mypara Figure \ref{fig:time graph} shows a plot of the time taken to solve \eqref{eq:MI:SSCCM} for the three cases considered in this topology: unconstrained, ``neighbor'' and fully decentralized.  According to this graph, for $N=1,2$, the time taken for each of the three cases is comparable. However, as the number of systems increases, the unconstrained quickly becomes infeasible, whereas the neighbor and  decentralized cases, the computation time is approximately linear in the number of nodes. %The time difference between the two latter cases can be explained by the fact that, for the ``neighbor'' case, the matrix $Y$ contains more non-identically zero elements than the fully decentralized case. In addition to this, in the former case, each element is defined as a polynomial of second degree on the variables $q_i$ and $\breve{q}_i$, while for the latter, they are defined on the variable $q_i$ only.

%\mypara Table~\ref{tab:chain case} shows a summary of the time taken by the solver to compute the matrices $W$ and $Y$ satisfying the set of matrix inequalities~\eqref{eq:main result}, according to the unrestricted, ``neighbor'' and fully decentralized cases.

\begin{figure}
	\centering
	\includegraphics[width=\columnwidth]{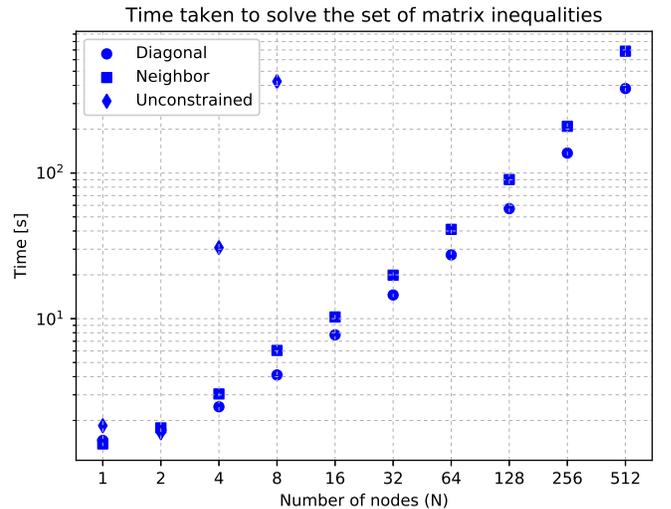}
	\caption{Computation time required to solve for a CCM for the three communication structures considered in Section \ref{sec:scalability example}.}
	\label{fig:time graph}
\end{figure}

\section{Conclusions}\label{sec:Conclusion}

\mypara In this paper we have developed a method for control design using separable control contraction metrics, building upon \cite{manchester_control_2017}. The main advantage in using a separable CCM is that it allows a convex (semidefinite programming) search for nonlinear feedback controllers with specified communication structure in the controller. Furthermore, we have shown that the search for a CCM can be made scalable for certain interaction structures defined by chordal graphs.

% trigger a \newpage just before the given reference
% number - used to balance the columns on the last page
% adjust value as needed - may need to be readjusted if
% the document is modified later
%\IEEEtriggeratref{8}
% The "triggered" command can be changed if desired:
%\IEEEtriggercmd{\enlargethispage{-5in}}

% references section

% can use a bibliography generated by BibTeX as a .bbl file
% BibTeX documentation can be easily obtained at:
% http://mirror.ctan.org/biblio/bibtex/contrib/doc/
% The IEEEtran BibTeX style support page is at:
% http://www.michaelshell.org/tex/ieeetran/bibtex/
%\bibliographystyle{IEEEtran}
% argument is your BibTeX string definitions and bibliography database(s)
%\bibliography{IEEEabrv,../bib/paper}
%
% <OR> manually copy in the resultant .bbl file
% set second argument of \begin to the number of references
% (used to reserve space for the reference number labels box)
%\printbibliography

\bibliographystyle{ieeetr}
\bibliography{Library,Distributed}

\end{document}